\title{Towards a Lower Bound for the Average Case Runtime of Simulated Annealing on TSP}
\author[1]{Bodo Manthey}
\author[1]{Jesse van Rhijn}
\affil[1]{Faculty of Electrical Engineering, Mathematics, and Computer Science,
University of Twente}
\let\given\givenbase
\newcommand{\diag}{\operatorname{diag}}
\def\prob{\ensuremath\mathbb{P}}
\def\expect{\ensuremath\mathbb{E}}
\theoremstyle{definition}
\theoremstyle{remark}
\newtheorem*{remark}{Remark}
\theoremstyle{plain}
\newtheorem*{claim}{Claim}
\declaretheorem[name=Theorem,numberwithin=section]{thm}
\newtheorem{corollary}{Corollary}[thm]
\newtheorem{lemma}[thm]{Lemma}
\newcommand*{\myproofname}{Proof}
\newenvironment{claimproof}[1][\myproofname]{\begin{proof}[#1]}{\end{proof}}
\begin{document}

\maketitle

\begin{abstract}
  We analyze simulated annealing (SA) for simple randomized instances of the 
    Traveling Salesperson Problem.
    Our analysis shows that the theoretically optimal
    cooling schedule of Hajek \cite{hajek_cooling_1988}
    explores members of
    the solution set which are in expectation far from the global optimum.
    We obtain a lower bound on the expected length of the
    final tour obtained by SA on these random instances.
    In addition, we also obtain an upper bound on the expected value of its
    variance. 
    These bounds assume that the Markov chain that describes SA is
    stationary, a situation that does not truly hold in practice. Hence, we also
    formulate conditions under which the bounds extend to
    the nonstationary case.
    These bounds are obtained
    by comparing the tour length distribution to a related distribution.
    We furthermore provide numerical
    evidence for a stochastic dominance relation that appears to exist
    between these two distributions, and formulate a conjecture
    in this direction. If proved, this conjecture implies
    that SA stays far from the global optimum with high probability when
    executed for any
    sub-exponential number of iterations. This would show that
    SA requires at least exponentially many iterations
    to reach a global optimum with nonvanishing probability.
\end{abstract}

\section{Introduction}

Discrete minimization problems require one to minimize an objective function over a
finite set of solutions. In many cases of interest, the computational effort
required to find a globally optimal solution is assumed to be prohibitively large.
A classical example
is the Traveling Salesperson Problem (TSP): given a weighted graph $G$,
the goal is to find a minimum-weight Hamiltonian cycle
of $G$. Although easily stated, the TSP is NP-hard
\cite{korte_combinatorial_2000}.

To nevertheless
obtain solutions that are satisfactory, a common resolution is to apply
local search heuristics rather than exact algorithms. Although these heuristics
are not guaranteed to converge to the global optimum, they often find solutions
of surprisingly high quality. Moreover, the problem of finding local optima
appears to be tractable on practical instances, despite the pessimistic
worst case performance of many of these heuristics
\cite{vattani_k-means_2011, arthur_smoothed_2011, englert_worst_2014, chandra_new_1999}.
In particular, the $k$-opt heuristic for TSP gives near-optimal
solutions for Euclidean instances \cite{cohen-addad_effectiveness_2015},
and runs in polynomial time in practice, as demonstrated
in a study by Johnson \& McGeoch
\cite[chapter 8]{aarts_local_2003}.

While local improvement heuristics are quite successful, they are prone to getting
stuck in local optima. An approach to solving this problem can be found in
metaheuristics such as simulated annealing (SA) 
\cite{aarts_simulated_1989,geman_stochastic_1984}. At the basis of SA lies a local
improvement heuristic. However, rather than requiring every step of the optimization process
to strictly decrease the objective function, a probabilistic approach is taken wherein
the objective function is allowed to increase. As such, SA can
escape a local optimum in which a local improvement heuristic would get stuck.

To control the probability with which these increasing steps are taken,
a control parameter called the temperature is used. At low temperatures, the
probability of increasing the objective function should become small; indeed, SA is effectively identical to its underlying local search
heuristic at a temperature of 0.
Previous results \cite{faigle_note_1991,hajek_cooling_1988} show that
for sufficiently slowly decreasing temperature, simulated annealing
asymptotically converges to the global optimum. Even at a finite
temperature, one can show that the global optimum is eventually reached with high probability
\cite{rajasekaran_convergence_1990}.

Unfortunately, while rigorous, these bounds are not practically useful. In the case
of the TSP, they can only guarantee convergence within $O(n^n)$ steps
\cite{aarts_simulated_1989, rajasekaran_convergence_1990}, which is more
costly than even a complete enumeration of the search space.
In practice, the algorithm
is thus terminated long before convergence is guaranteed.

These upper bounds are furthermore taken with respect to all instances,
and thus can be considered worst-case bounds. As the typical runtime of
local search heuristics is often much better than the worst case runtime,
a natural question to ask is then if we can prove a better
bound for simulated annealing on average-case instances. 

Here, we set out towards a rigorous lower bound on the expected time to
reach the optimal TSP tour for a simple class of
random TSP instances. 
These results are formally stated in \Cref{sec:results}.
We prove a lower bound on the expected value of
the tour length obtained by the algorithm if the underlying
Markov chain is assumed to be in equilibrium. Furthermore, we
show that this bound also holds out of equilibrium, assuming the
Markov chain is sufficiently mixed. Finally, we formulate a conjecture on the
statistical behavior of the tour length. Proving this conjecture would allow us
to obtain a rigorous lower bound on the runtime of simulated annealing
for the logarithmic cooling schedule derived by Hajek \cite{hajek_cooling_1988}.

\section{Preliminaries}

\subsection{Notation and Definitions}

Throughout the paper, we use $\mathbb{R}_+$ to refer
to the nonnegative reals, and $\mathbb{R}_{++}$ for the positive reals.

Consider any discrete minimization problem with finite solution space $S$ and
associated objective function $J : S \to \mathbb{R}_+$. We denote by $n \in \mathbb{N}$
the \emph{problem size}, which
is simply the number of nodes of the underlying TSP instance in most cases in this paper.

Since simulated annealing defines a Markov chain on $S$, it is convenient to consider
the state graph of this Markov chain. To that end, we first define a neighborhood structure
on $S$, which is simply a function $N : S \to 2^S$. Although the neighborhood function
can be arbitrary in principle, it is usually highly structured. The state
graph is then given by the directed graph $H = (S, A)$, where we have
$(x, y) \in A$ iff $y \in N(x)$.

The neighborhood function is often symmetric, in the sense that if
$x \in N(y)$ then $y \in N(x)$. In this case, we will regard the state graph as undirected.
This definition of the neighborhood differs from the one usually used
in the context of local search, where $x \in N(y)$ only if $J(x) < J(y)$. We
relax this requirement here, since simulated annealing may increase the objective
function in any given iteration.

In this paper, we will always assume that the neighborhood $N$ is symmetric. In other words,
we consider only local search heuristics with reversible iterations. Hence, the state
graph $H$ considered here is undirected. In addition, we assume throughout that $H$
is $d$-regular, where $d$ may be a function of $n$. In particular, for the TSP
with the 2-opt heuristic, the state graph has $|S| = (n-1)!/2$ and is
$n(n-3)/2$-regular.

We call a nonincreasing function $T : \mathbb{N} \to \mathbb{R}_{++}$
a cooling schedule. In the context of the physical process that SA takes its
inspiration from, a cooling schedule determines the rate at which the temperature
of a material is decreased with time.
The choice of the cooling schedule is in principle arbitrary. A trade-off exists between the
quality of the solution obtained, and the rate at which $T$ decreases. Slowly
decreasing cooling schedules potentially yield optimal solutions \cite{hajek_cooling_1988},
but may be impractical \cite{aarts_simulated_1989}.

Since SA uses a Markov chain, it is a randomized algorithm. In this work, we will analyze
the performance of SA on random \emph{instances}. Thus, the involved Markov chains
and their stationary distributions will be functions of the randomness present in the
instances. In order to meaningfully assign probabilities, the formalism of
compound distributions is useful.

Suppose $\pi(x \given \theta)$ denotes a probability measure on $S$, dependent on a set
of parameters $\theta$. We write $\expect_\pi(f \given \theta)$ for the conditional
expected value of $f : S \to \mathbb{R}$. If $\theta$ are themselves distributed according
to $\mu$, we write the unconditional or \emph{compound} expectation
$\expect(f) = \expect_\mu(\expect_\pi(f \given \theta))$. Moreover, we have the
joint 
distribution $p(x, \theta) = \pi(x\given \theta)\mu(\theta)$, which is the
generalization of the mixture distribution \cite{mood_introduction_1973} to the
uncountable setting.

Finally, to avoid ambiguities, we provide a definition of the version of simulated annealing
we consider in this paper in Algorithm \ref{sa_def}. 

\begin{algorithm}[t]
    \KwData{Solution space $S$, neighborhood function $N$, cooling schedule $T$, a fixed
    number of iterations $t \in \mathbb{N}$, and a state $x_0 \in S$.}
    \For{$i \in [t]$}{
        generate a state $y \in N(x_i)$ uniformly at random\;
        $\Delta \gets J(y) - J(x_i)$\;
        \eIf{$\Delta < 0$}{
            $x_{i+1} \gets y$\;
        }{
            $p \gets \exp\left(-\Delta/T(i)\right)$\;
            set $x_{i+1}$ to  $y$ with probability $p$ and
            to  $x_i$ with probability $1-p$\;
        }
    }
    \textbf{return} $x_t$
\caption{Simulated Annealing \cite{rajasekaran_convergence_1990}.\label{sa_def}}
\end{algorithm}

\subsection{Random Instance Model}

We next define the model we use to generate random TSP instances,
and the relevant probability distributions. The model we use
is a symmetric version of a model introduced by Karp \cite{karp_patching_1979, el_lawler_chapter_1985}.

Let $G = (V, E)$ be a complete graph with
$|V| = n$ and $|E| = n(n-1)/2 = m$,
and define the edge weight function $w : E \to [0, 1]$. We consider a model where each edge
weight is uniformly 
and independently
drawn from the unit interval: $w(e) \sim U[0, 1]$ for each $e \in E$. When convenient
we will consider the weights as a uniform random vector in $[0, 1]^m$, indexed
by $E$. With a slight abuse
of notation, we use the same symbol $w$ for this vector. The
distribution of this weight vector will be denoted by $\mu$ throughout.

At fixed temperature, Algorithm \ref{sa_def} defines a Markov chain with transition
matrix
\[
    \tilde{P}_T(x, y \given w) = \begin{cases}
        0, & y \notin N(x) \cup \{x\},  \\
        \frac{1}{d}\min\left\{1, \exp\left(-\frac{J(y \given w) - J(x \given w)}{T}\right)\right\}, & y \in N(x), \\
        1 - \sum_{z \in N(x)} \tilde{P}_T(x, z \given w), & y = x,
    \end{cases}
\]
where $J(x \given w) = \sum_{e \in x} w(e)$. We use the conditional notation here
to emphasize that $J$ is a random variable, dependent on the weights $w$.
If the Markov chain defined by simulated annealing is in equilibrium at a certain temperature
$T$, then the probability that the chain is in state $x \in S$ at any time can be computed
for a fixed set of edge weights $w \in [0, 1]^m$ as follows: 
\[
    \tilde{\pi}_T(x \given w) = \frac{1}{\tilde{Z}(T \given w)}e^{-J(x \given w)/T},
\]
where
\[
    \tilde{Z}(T \given w) = \sum_{x \in S} e^{-J(x \given w)/T}
\]
is called the \emph{partition function}, which
serves to ensure that $\tilde{\pi}_T$ is a probability measure. 
It will prove more convenient to work with the inverse
temperature $\beta = 1/T$. Therefore, we define $Z(\beta \given w) = \tilde{Z}(1/\beta \given w)$. Likewise, we set
$\pi_\beta(x \given w) = \tilde{\pi}_{1/\beta}(x \given w)$ and $P_\beta = \tilde{P}_{1/\beta}$.

Due to the symmetry of our random instance model, it can be convenient
to recast the stationary distribution in terms of the tour length, rather
than in terms of $x$ and $W$.
To do so, we first define
\[
    Q(W) = \{J(x \given W) \given x \in S\},
\]
the set of tour lengths for a given realization of the edge weights, $W$.
This
allows us to formally define
\[
    \rho(j \given W) = \frac{1}{|S|} \sum_{j' \in Q(W)} d(j' \given W)\delta(j - j'),
\]
which is the density of tours of length $j$, given $W$. Here,
$d(j\given W)$ denotes the number of tours
with length exactly $j$ for a given realization of the edge weights,
and $\delta(\cdot)$ denotes the Dirac delta.
We also define
\[
    \rho(j) = \int_{[0,1]^m} \rho(j\given W=w)\mu(w) \mathrm{d}w.
\]
If $\mu$ is the uniform distribution on $U[0,1]^m$, as we assume here,
then $\rho$ is exactly the distribution of the sum of $n$ independent
$U[0,1]$ variables. To see this, note that sampling from $\rho$
is equivalent to sampling a tour $x$ from $S$ uniformly, and computing its
length given the random edge weights. Since each edge of $x$ is $U[0,1]$,
it follows that its length is distributed as claimed.

With these definitions, we can now write a formal expression for the
distribution of $J$, at inverse temperature $\beta$:
\begin{align}\label{eq:quenched}
    \pi^q_\beta(j) = \int_{[0,1]^m}\left(\frac{e^{-\beta j} \rho(j \given W=w)}
        {\int_0^n e^{-\beta j'}\rho(j'\given W=w)\mathrm{d}j'}\right) \mu(w)\mathrm{d}w.
\end{align}
The meaning of the superscipt $q$ will be clarified in a moment. First, let
us also define a related distribution,
\begin{align}\label{eq:annealed}
    \pi^a_\beta(j) = \frac{e^{-\beta j} \rho(j)}{\int_0^n e^{-\beta j'}\rho(j')\mathrm{d}j'}.
\end{align}

These quantities can be given an interpretation,
which is well-known in the physics of disordered systems \cite{menon_physics_2012}.
The distribution $\pi_\beta^q$ is called the \emph{quenched} distribution. 
It corresponds to the situation where one first draws the random edge variables
from $\mu$, and subsequently performs Algorithm \ref{sa_def} at constant temperature $\beta^{-1}$,
until the Markov chain is in equilibrium. Then
$\pi^q_\beta$ is the distribution of the tour length obtained
at the end of this process.

In contrast, $\pi^a_\beta$ is called
the \emph{annealed} distribution. In this situation, one allows the edge lengths
to change dynamically during execution of the algorithm, defining
a Markov chain on the extended state space
$S \times [0, 1]^m$.
Suppose at some iteration,
the Markov chain is in the state $(x, w) \in S \times [0,1]^m$. One
then randomly chooses an edge, and either increases or decreases its
weight, obtaining a new weight vector $\bar{w}$.
One then selects a new tour $y \in N(x)$. The acceptance step
in Algorithm \ref{sa_def} is then performed, comparing $J(x\given w)$ to $J(y\given \bar{w})$.

Clearly, the quenched setting is the one relevant to our average case model;
however, its statistics are quite complicated.
Indeed,
so-called quenched disorder is an active area of research in mathematics, physics and
computer science; for a fully rigorous work in the field, see e.g.\ Talagrand 
\cite{talagrand_spin_2003}. Meanwhile, the statistics of the annealed
distribution are much simpler. Luckily, the annealed distribution still
provides some information that we can use.

\subsection{Summary of Main Results and a Conjecture}\label{sec:results}

In \Cref{sec:equilibrium}, we obtain the following theorem and corollary.

\begin{restatable*}{thm}{equilibriumcost}\label{lemma:cost}
      Let $J \sim \pi_\beta^q$, cf. \Cref{eq:quenched}, and
      let $\sigma^2(W) := \mathrm{Var}_{\pi_\beta}(J \given W)$. Then
      \begin{align*}
          \expect(J) \geq n\left(
                \frac{1}{\beta} - \frac{1}{e^\beta - 1} 
          \right)
            \quad \text{and} \quad 
            \expect(\sigma^2(W)) \leq
                \frac{n}{\beta^2}.
      \end{align*}
\end{restatable*}

\begin{restatable*}{corollary}{equilibriumcostcorollary}\label{corollary:cost}
    Assuming simulated annealing is stationary at iteration $t$, 
    the logarithmic cooling schedule $T(t) = a/\ln(t)$ with
    $a > 0$ yields
    \[
        \expect(J(t)) \geq n\left(
            \frac{a}{\ln t} - \frac{1}{t^{1/a} - 1}
        \right)
        \quad \text{and} \quad
        \expect(\sigma^2(W)) \leq \frac{a^2 n}{\ln^2 t}.
    \]
\end{restatable*}

\Cref{lemma:cost} tells us that for constant $\beta$, the expected tour length
returned by Algorithm \Cref{sa_def} is $\Omega(n)$. Contrasting this
with the fact that the optimal tour is $\Theta (1)$ with high probability
\cite{frieze_random_2004}, we see that simulated annealing
stays far from the optimal tour when the temperature is kept constant.

Extending this result beyond equilibrium, which
we do in \Cref{sec:mixing}, leads to the following theorem.
For the purposes of this theorem, we consider an \emph{epoch}
of Algorithm \ref{sa_def} as a sequence of subsequent iterations in which
the temperature is kept constant. 

\begin{restatable*}{thm}{nonequilibriumcost}\label{thm:lb_nonequilibrium}
    Suppose Algorithm \ref{sa_def}
    is executed for $t = 2^{o(n)}$ epochs with a 2-opt neighborhood
    and with the logarithmic cooling schedule $T(t) = a/\ln(t)$, taking $a \geq n$.
    Assume that at epoch $k$,
    the temperature is kept constant for 
    $\Omega(n^9 k^2 \log n)$ iterations.
    Then the lower bound on $\expect(J(t))$ from
    \Cref{corollary:cost} holds.
\end{restatable*}

Finally, in \Cref{sec:tailbound}, we formulate a conjecture relating the
annealed and quenched distributions. 

\begin{restatable*}{conjecture}{stochdomconj}\label{conjecture:stochdom}
    Let $J_a \sim \pi_\beta^a$
    and $J_q \sim \pi_\beta^q$.
    For any $n \geq 3$, we have $J_q \succeq J_a$.
\end{restatable*}

We also prove this conjecture for $n=3$, but have been unable to extend
it to $n > 3$. Hence, we instead provide some numerical evidence for
\Cref{conjecture:stochdom} in this section.

\section{Tour Length in Equilibrium}\label{sec:equilibrium}

In this section we prove a lower bound on the length of a tour in the TSP after running
Algorithm \ref{sa_def} for a certain number of steps $t$, using the logarithmic cooling
schedule
\begin{align}\label{eq:logcooling}
    T(t) = \frac{a}{\ln(t)}.
\end{align}
Here, $a = a(n)$
is a positive real number that may depend on $n$. For simplicity of notation,
we often omit the dependence on $n$, and prefer to explicitly state
when it is relevant. The origin of this cooling schedule
is the work of Hajek \cite{hajek_cooling_1988}. Hajek showed that this schedule
guarantees that $\pi_\beta$ will concentrate on the global minima of the instance as
$t \to \infty$, provided
that $a$ satisfies a condition related to the distribution of tour lengths. For
our random instance model, $a \geq n$ suffices. Some of our results 
(e.g.\ \Cref{lemma:moments_nonequilibrium,thm:lb_nonequilibrium}) indeed carry this
restriction on $a$; whenever this is the case, it will be explicitly stated.

We first formalize the fact that
the partition function contains useful information concerning the distribution
$\pi_\beta$, as it is related to the cumulant generating function
associated with this distribution. For the sake of clarity, we provide a brief definition of
the cumulant generating function here; for a more thorough treatment, see Kendall
\cite{alan_stuart_kendalls_2010}.

Given a random variable $X$ with moment generating function
$M_X$, its cumulant generating function is defined as
\[
    K_X(s) := \ln M_X(s) = \ln \expect(e^{sX}).
\]
The $p^\text{th}$ cumulant of $X$ is then given by
\[
    \kappa_p(X) = \left.\frac{\mathrm{d}^p}{\mathrm{d}s^p} K_X(s)\right|_{s=0}.
\]
Note the connection between the generating functions for the cumulants
and moments of $X$. Cumulants and moments can be expressed
as polynomials of one another. For the first two, the relationship is very straightforward:
$\kappa_1(X) = \expect(X)$, and $\kappa_2(X) = \mathrm{Var}(X)$.

\begin{restatable}{lemma}{cumulants}\label{lemma:cumulants}

    Let $\kappa_p(\beta \given W)$ denote the $p^\mathrm{th}$ cumulant of the cost function with respect to the
    distribution $\pi_\beta(\cdot \given W)$. We have
    \[
        \kappa_p(\beta \given W) = (-1)^p \frac{\partial^p}{\partial\beta^p}
            \ln Z(\beta \given W). 
    \]
\end{restatable}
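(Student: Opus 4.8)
The plan is to verify directly that differentiating the logarithm of the partition function reproduces the cumulants of $J$ under $\pi_\beta(\cdot \given W)$. First I would observe that the moment generating function of $J$ with respect to $\pi_\beta(\cdot \given W)$ is
\[
    M_J(s \given W) = \sum_{x \in S} e^{sJ(x \given W)}\pi_\beta(x \given W)
        = \frac{1}{Z(\beta \given W)}\sum_{x \in S} e^{(s - \beta)J(x \given W)}
        = \frac{Z(\beta - s \given W)}{Z(\beta \given W)},
\]
where the last equality uses the definition $Z(\beta \given W) = \sum_{x \in S} e^{-\beta J(x \given W)}$. Taking logarithms gives the cumulant generating function $K_J(s \given W) = \ln Z(\beta - s \given W) - \ln Z(\beta \given W)$.

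Next I would compute the $p$-th derivative with respect to $s$. Since the second term does not depend on $s$, we have
\[
    \frac{\partial^p}{\partial s^p} K_J(s \given W)
        = \frac{\partial^p}{\partial s^p} \ln Z(\beta - s \given W).
\]
Writing $g(u) = \ln Z(u \given W)$ and applying the chain rule to $g(\beta - s)$, each differentiation in $s$ brings down a factor of $-1$, so $\frac{\partial^p}{\partial s^p} g(\beta - s) = (-1)^p g^{(p)}(\beta - s)$. Evaluating at $s = 0$ yields $\kappa_p(\beta \given W) = (-1)^p g^{(p)}(\beta) = (-1)^p \frac{\partial^p}{\partial\beta^p}\ln Z(\beta \given W)$, as claimed.

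The only technical points to address are that the sums involved are finite (since $S$ is finite), so $M_J(s \given W)$ is an entire function of $s$ and all interchanges of differentiation and summation are justified, and that $Z(\beta \given W) > 0$ so the logarithm is well-defined and smooth. There is no real obstacle here; the main care needed is simply bookkeeping the sign $(-1)^p$ arising from the substitution $s \mapsto \beta - s$, and noting that the definition of $\kappa_p$ via $K_J$ matches the standard definition recalled just above the lemma statement.
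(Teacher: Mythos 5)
Your proposal is correct and follows essentially the same route as the paper's proof: identify $M_J(s \given W) = Z(\beta - s \given W)/Z(\beta \given W)$, take logarithms, and differentiate $p$ times in $s$ to pick up the $(-1)^p$ from the substitution $s \mapsto \beta - s$. The extra remarks on finiteness of $S$ and positivity of $Z$ are fine but not needed beyond what the paper already assumes.
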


\begin{corollary}\label{cor:mean_variance}
    Let $W$ be an arbitrary $[0,1]^m$-valued random variable. Then
    \begin{align}\label{eq:mean}
        \expect_{\pi_\beta} (J \given W)
             = -\frac{\partial}{\partial \beta} \ln Z(\beta \given W) \quad \text{and} \quad
        \mathrm{Var}_{\pi_\beta}(J \given W) = \frac{\partial^2}{\partial\beta^2} \ln Z(\beta \given W).
    \end{align}
\end{corollary}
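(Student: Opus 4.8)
This corollary is an immediate specialization of Lemma~\ref{lemma:cumulants} to the cases $p=1$ and $p=2$. The plan is simply to invoke the identities $\kappa_1(X) = \expect(X)$ and $\kappa_2(X) = \mathrm{Var}(X)$, which were noted just before the statement of Lemma~\ref{lemma:cumulants}, and substitute $p=1$ and $p=2$ into the formula $\kappa_p(\beta \given W) = (-1)^p \partial^p_\beta \ln Z(\beta \given W)$. For $p=1$ this yields $\expect_{\pi_\beta}(J \given W) = \kappa_1(\beta \given W) = -\partial_\beta \ln Z(\beta \given W)$, and for $p=2$ it yields $\mathrm{Var}_{\pi_\beta}(J \given W) = \kappa_2(\beta \given W) = +\partial^2_\beta \ln Z(\beta \given W)$, since $(-1)^2 = 1$. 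The only subtlety worth a sentence is that Lemma~\ref{lemma:cumulants} is stated for a fixed realization $W = w$; since the identity holds for every $w$ in the support, it holds for an arbitrary $[0,1]^m$-valued random variable $W$ by treating the formula pointwise in $w$. I do not anticipate any real obstacle here — the content is entirely in Lemma~\ref{lemma:cumulants}, and this corollary is just recording the two lowest-order consequences in the notation (mean and variance of the tour length) that will be used in the proof of Theorem~\ref{lemma:cost}.
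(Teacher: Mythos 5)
Your proposal is correct and matches the paper's (implicit) argument exactly: the paper gives no separate proof of this corollary, intending precisely the specialization of Lemma~\ref{lemma:cumulants} to $p=1,2$ combined with the identities $\kappa_1(X)=\expect(X)$ and $\kappa_2(X)=\mathrm{Var}(X)$ stated just before the lemma. Nothing is missing.
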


Notice that $\expect_{\pi_\beta}(J \given W)$
and $\mathrm{Var}_{\pi_\beta}(J \given W)$ are themselves random variables with
respect to the edge weights. 
Using the following series of lemmas, we show a lower bound for the expected value of the objective
function and an upper bound for its variance (\Cref{lemma:cost}).
For both bounds, the expected value is taken over both the random edge weights and the Markov chain.
The following is a direct result of the Leibniz rule:

\begin{lemma}\label{lemma:derivative_mean}
    \[
        \expect\left(\frac{\partial^k}{\partial\beta^k}\ln Z(\beta \given W)\right) = \frac{\mathrm{d}^k}{\mathrm{d}\beta^k}\expect(\ln Z(\beta \given W)). 
    \]
\end{lemma}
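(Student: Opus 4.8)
The identity says that the differential operator $\frac{\mathrm{d}^k}{\mathrm{d}\beta^k}$ commutes with the expectation over the edge weights, $\expect(\cdot) = \int_{[0,1]^m}(\cdot)\,\mu(w)\,\dd w$; this is exactly differentiation under the integral sign, so the plan is to verify the hypotheses of the Leibniz rule and then induct on $k$. Two structural facts make the verification routine. First, for every fixed $w \in [0,1]^m$ the map $\beta \mapsto \ln Z(\beta \given w)$ is $C^\infty$ on $(0,\infty)$, since $Z(\beta \given w) = \sum_{x \in S} e^{-\beta J(x \given w)}$ is a finite sum of smooth, strictly positive functions of $\beta$. Second, $\mu$ is a probability measure, so $[0,1]^m$ has finite $\mu$-measure; consequently any bound on $\frac{\partial^k}{\partial\beta^k}\ln Z(\beta \given w)$ that is uniform in $w$ and locally uniform in $\beta$ at once provides an integrable dominating function.

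To produce such a bound I would invoke \Cref{lemma:cumulants}, which gives $\frac{\partial^k}{\partial\beta^k}\ln Z(\beta \given w) = (-1)^k \kappa_k(\beta \given w)$, where $\kappa_k(\beta \given w)$ is the $k^{\mathrm{th}}$ cumulant of $J$ under $\pi_\beta(\cdot \given w)$. For every tour $x$ and every $w \in [0,1]^m$ we have $J(x \given w) = \sum_{e \in x} w(e) \in [0,n]$, so each moment $\expect_{\pi_\beta}(J^p \given w)$ lies in $[0, n^p]$ regardless of $\beta$ and $w$. Since the $k^{\mathrm{th}}$ cumulant is a fixed polynomial in the first $k$ moments (as recalled above), it follows that $\lvert \kappa_k(\beta \given w) \rvert \le C_k$ for a constant $C_k = C_k(n)$ independent of both $\beta$ and $w$. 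Thus the constant function $w \mapsto C_k$ is an integrable majorant for $\frac{\partial^k}{\partial\beta^k}\ln Z(\beta \given w)$, valid on all of $(0,\infty)$.

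The induction on $k$ is now immediate. The case $k = 0$ is a tautology. Assume the identity for $k-1$, i.e.\ $\expect\!\left(\frac{\partial^{k-1}}{\partial\beta^{k-1}}\ln Z(\beta \given W)\right) = \frac{\mathrm{d}^{k-1}}{\mathrm{d}\beta^{k-1}}\expect(\ln Z(\beta \given W))$, and differentiate both sides once more in $\beta$. On the left, for each $w$ the function $\beta \mapsto \frac{\partial^{k-1}}{\partial\beta^{k-1}}\ln Z(\beta \given w)$ is differentiable, with derivative $\frac{\partial^{k}}{\partial\beta^{k}}\ln Z(\beta \given w)$ dominated by the integrable constant $C_k$; the Leibniz rule therefore permits exchanging the derivative with the integral, producing $\expect\!\left(\frac{\partial^{k}}{\partial\beta^{k}}\ln Z(\beta \given W)\right)$. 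Equating this with the right-hand side $\frac{\mathrm{d}^{k}}{\mathrm{d}\beta^{k}}\expect(\ln Z(\beta \given W))$ closes the induction.

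There is no real obstacle in this argument; the only point needing a little care is the construction of a derivative bound that is simultaneously uniform in $w$ and locally uniform in $\beta$, and this is precisely what the boundedness of tour lengths, $J(x \given w) \in [0,n]$, supplies through the cumulant formula of \Cref{lemma:cumulants}.
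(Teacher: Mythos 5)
Your proof is correct and follows exactly the route the paper intends: the paper simply asserts the identity as ``a direct result of the Leibniz rule'' without supplying any argument, and your verification of the Leibniz rule's hypotheses---smoothness of $\beta \mapsto \ln Z(\beta \given w)$ for fixed $w$, together with the uniform bound on the derivatives obtained from \Cref{lemma:cumulants} and the fact that $J(x \given w) \in [0,n]$---is precisely the justification the paper leaves implicit. Nothing in your argument deviates from or adds to the paper's approach beyond filling in these routine details.
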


\Cref{lemma:derivative_mean} tells us that, at constant $\beta$,
\[
    \expect(J) = \expect_\mu \left( \expect_{\pi_\beta} (J \given W)\right)
    = -\frac{\mathrm{d}}{\mathrm{d}\beta}\expect(\ln Z(\beta \given W)).
\]
This identity allows us to find the expected length of a tour drawn from $\pi_\beta(\cdot \given W)$,
where the expectation is taken over $W \sim U[0, 1]^m$. One technical difficulty is that the expectation
of $\ln Z(\beta \given W)$ is not straightforward to compute. If it were possible to move the expectation inside the
logarithm, this would be of no issue, but this is not valid in general. Luckily,
we can show through \Cref{lemma:j_lower_bound} that $\ln \expect (Z(\beta \given W))$
still provides some information
on $\expect(J)$.

\begin{restatable}{lemma}{jlowerbound}\label{lemma:j_lower_bound}
    \[
        \frac{\mathrm{d}}{\mathrm{d}\beta} \expect(\ln Z(\beta \given W)) 
            \leq \frac{\mathrm{d}}{\mathrm{d}\beta} \ln \expect(Z(\beta \given W))
            \quad \text{and} \quad
        \frac{\mathrm{d}^2}{\mathrm{d}\beta^2} \expect(\ln Z(\beta \given W)) 
            \leq \frac{\mathrm{d}^2}{\mathrm{d}\beta^2} \ln \expect(Z(\beta \given W)).
    \]
\end{restatable}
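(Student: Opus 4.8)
The plan is to establish both inequalities by recognizing that each is a statement about a convexity or variance-type quantity, and then invoking Jensen's inequality applied to the random variable $Z(\beta \given W)$ (or an appropriate transform of it) after justifying that differentiation commutes with expectation. First I would address the first inequality. Using \Cref{lemma:derivative_mean} to pull the $\beta$-derivative outside the expectation on the left-hand side, I have $\frac{\dd}{\dd\beta}\expect(\ln Z(\beta\given W)) = \expect\!\left(\frac{\partial_\beta Z(\beta\given W)}{Z(\beta\given W)}\right)$, while the right-hand side equals $\frac{\partial_\beta \expect(Z(\beta\given W))}{\expect(Z(\beta\given W))} = \frac{\expect(\partial_\beta Z(\beta\given W))}{\expect(Z(\beta\given W))}$, again justifying the interchange by Leibniz. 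Writing $Z = Z(\beta\given W)$ and $\partial_\beta Z = -\sum_{x\in S} J(x\given W) e^{-\beta J(x\given W)}$, both sides become ratios of expectations, and the inequality $\expect(\partial_\beta Z / Z) \le \expect(\partial_\beta Z)/\expect(Z)$ should follow from the FKG/Chebyshev correlation inequality or directly from Jensen: since $t \mapsto \ln t$ is concave, $\expect(\ln Z) \le \ln \expect(Z)$, but I need the derivative version. The cleanest route is to note that $\partial_\beta Z / Z = \partial_\beta \ln Z = -\expect_{\pi_\beta}(J\given W)$, which is nonpositive and, crucially, an increasing function of the weights is getting averaged; alternatively, I would directly show the two-sided bound by considering the function $g(\beta) = \ln\expect(Z) - \expect(\ln Z) \ge 0$ and arguing about its derivative, but that does not immediately give monotonicity.

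A more robust approach, and the one I would actually carry out, is to prove both inequalities simultaneously by observing that $\ln\expect(Z(\beta\given W)) - \expect(\ln Z(\beta\given W))$ need not be monotone, so instead I reduce each inequality to a pointwise-in-a-dummy-variable statement. For the first: both $\frac{\dd}{\dd\beta}\expect(\ln Z)$ and $\frac{\dd}{\dd\beta}\ln\expect(Z)$ are negatives of (averaged) mean tour lengths, and I would show the inequality is equivalent to $\expect\big(\expect_{\pi_\beta}(J\given W)\big) \ge \expect_{\pi^a_\beta}(J)$, i.e., that the quenched mean dominates the annealed mean — which follows because the map $W \mapsto Z(\beta\given W)$ and the "reweighting" it induces correlate negatively with large $J$. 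Concretely, $\expect_\mu(\expect_{\pi_\beta}(J\given W)) - \expect_{\pi^a_\beta}(J) = \expect_\mu\!\left(\frac{\int J e^{-\beta J}\rho(J\given W)\dd J}{\int e^{-\beta J}\rho(J\given W)\dd J}\right) - \frac{\int J e^{-\beta J}\rho(J)\dd J}{\int e^{-\beta J}\rho(J)\dd J}$, and this is a covariance-type expression; I expect it to be sign-definite by a Chebyshev/FKG argument once I identify the monotone coupling. For the second inequality, the right-hand side $\frac{\dd^2}{\dd\beta^2}\ln\expect(Z)$ is the variance of $J$ under the annealed distribution $\pi^a_\beta$ (by \Cref{cor:mean_variance} applied with the annealed measure), while the left-hand side $\expect\big(\frac{\partial^2}{\partial\beta^2}\ln Z(\beta\given W)\big) = \expect(\mathrm{Var}_{\pi_\beta}(J\given W))$ is the expected quenched variance; the inequality is then exactly the statement that the expected conditional variance is at most the total variance, i.e., the law of total variance, since $\mathrm{Var}_{\pi^a_\beta}(J) = \expect_\mu(\mathrm{Var}_{\pi_\beta}(J\given W)) + \mathrm{Var}_\mu(\expect_{\pi_\beta}(J\given W)) \ge \expect_\mu(\mathrm{Var}_{\pi_\beta}(J\given W))$, provided one checks that $\pi^a_\beta$ is genuinely the $W$-mixture of $\pi_\beta(\cdot\given W)$ in the appropriate sense.

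The main obstacle, I expect, is the first inequality: the law of total variance gives the second one almost for free once the annealed/quenched decomposition is set up cleanly, but the first-derivative inequality does not have an equally clean probabilistic identity behind it. The difficulty is that $\expect(\ln Z)$ and $\ln\expect(Z)$ differ, yet we need a relation between their derivatives, not their values; differentiating the Jensen gap $\ln\expect(Z) \ge \expect(\ln Z)$ does not preserve an inequality. I would resolve this by writing both derivatives explicitly as ratios and applying the Harris/FKG inequality: the key observation is that for fixed $\beta$, both $Z(\beta\given W)$ and $-\partial_\beta Z(\beta\given W) = \sum_x J(x\given W)e^{-\beta J(x\given W)}$ are functions of the weight vector $w$ with a definite monotonicity structure (each $w(e)$ appears in many tours), so their normalized correlation has a fixed sign, yielding $\expect(-\partial_\beta Z/Z) \ge \expect(-\partial_\beta Z)/\expect(Z)$ — equivalently the displayed inequality. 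If the FKG route proves delicate because $-\partial_\beta Z / Z$ is not obviously monotone in each coordinate, the fallback is to differentiate under the integral sign and bound the resulting expression $\frac{\dd}{\dd\beta}\big[\ln\expect(Z) - \expect(\ln Z)\big]$ directly via a second application of Jensen to the measure $\mu$ reweighted by $Z(\beta\given W)/\expect(Z(\beta\given W))$, which turns the claim into a comparison of $\expect_{\tilde\mu}(\partial_\beta\ln Z)$ against $\expect_\mu(\partial_\beta\ln Z)$ under two mutually absolutely continuous measures with a log-likelihood ratio that is monotone in $\ln Z$.
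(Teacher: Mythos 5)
Your proposal correctly reformulates both inequalities as correlation statements, but neither of the two key steps closes, so there are genuine gaps. For the second inequality: $\pi^a_\beta$ is indeed a mixture of the conditionals $\pi_\beta(\cdot\given W)$, but the mixing measure is not $\mu$ --- it is the tilted measure $\tilde\mu$ with $\dd\tilde\mu/\dd\mu = Z(\beta\given W)/\expect_\mu(Z(\beta\given W))$, because each conditional carries its own normalization. The law of total variance therefore gives $\mathrm{Var}_{\pi^a_\beta}(J)\ge \expect_{\tilde\mu}(\mathrm{Var}_{\pi_\beta}(J\given W))$, and passing from $\expect_{\tilde\mu}$ to the required $\expect_\mu$ costs a term $\mathrm{Cov}_\mu(\mathrm{Var}_{\pi_\beta}(J\given W), Z(\beta\given W))/\expect(Z)$ whose sign is not controlled; the caveat you flagged is precisely where the argument breaks. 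For the first inequality, the same tilting shows the claim is equivalent to $\mathrm{Cov}_\mu(\expect_{\pi_\beta}(J\given W), Z(\beta\given W))\le 0$, as you say, but Harris/FKG requires $\expect_{\pi_\beta}(J\given W)$ to be coordinatewise increasing in $w$; its partial derivative in $w(e)$ equals $\expect_{\pi_\beta}(\chi_e \given W)-\beta\,\mathrm{Cov}_{\pi_\beta}(J,\chi_e\given W)$, with $\chi_e$ the indicator that edge $e$ lies in the tour, and this can be negative. Your fallback (Jensen under the $Z$-reweighted measure, ``log-likelihood ratio monotone in $\ln Z$'') is a restatement of the same covariance inequality, not a proof: $\partial_\beta \ln Z$ is not a monotone function of $\ln Z$, since both are functions of the high-dimensional weight vector. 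These correlation inequalities genuinely fail for general random energy landscapes --- e.g.\ two equiprobable configurations, one with a single near-zero-energy state among many very-high-energy states and one with all states at moderate energy, make the annealed mean energy exceed the quenched mean energy --- so no argument at the level of generality you invoke can succeed; model-specific structure must enter somewhere.

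The paper's proof is entirely different and much more elementary: it negates each inequality, integrates from $0$ to $\beta$, and uses the fact that $Z(0\given W)=|S|$ is deterministic (and, for the second part, that the first derivatives of $\expect(\ln Z)$ and $\ln\expect(Z)$ coincide at $\beta=0$, both equaling $-|S|^{-1}\sum_{x}\expect(J(x\given W))$) so that the boundary terms cancel, reducing the first claim to Jensen's inequality and the second to the first. The anchor at $\beta=0$, where the Gibbs measure is uniform and the randomness drops out of $Z$, is the ingredient your proposal never uses. You correctly observe that ``differentiating the Jensen gap does not preserve an inequality''; note, for what it is worth, that this observation also puts pressure on the paper's own argument, since integrating the negation from $0$ to $\beta$ presupposes that the reversed inequality holds throughout $[0,\beta]$ rather than at a single point.
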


\begin{restatable}{lemma}{partitionfunction}\label{lemma:partition_function}
    The expected value of the partition function for the average case model
    is given by
    \[
        \expect(Z(\beta \given W)) = \frac{(n-1)!}{2} \left(\frac{1 - e^{-\beta}}{\beta}\right)^n.
    \]
\end{restatable}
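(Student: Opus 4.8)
The plan is to compute $\expect(Z(\beta \given W))$ directly by linearity of expectation over the tours in $S$. Writing $Z(\beta \given W) = \sum_{x \in S} e^{-\beta J(x \given W)}$ and taking the expectation over $W \sim U[0,1]^m$, we get $\expect(Z(\beta \given W)) = \sum_{x \in S} \expect\left(e^{-\beta J(x \given W)}\right)$. The key observation is that for any fixed tour $x$, its length $J(x \given W) = \sum_{e \in x} w(e)$ is a sum of exactly $n$ independent $U[0,1]$ random variables, since a Hamiltonian cycle on $n$ vertices has $n$ edges and the edge weights are drawn independently. Hence each term in the sum is the moment generating function of a sum of $n$ i.i.d.\ uniforms, evaluated at $-\beta$, and in particular does not depend on which tour $x$ we picked.

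First I would record that $\expect\left(e^{-\beta w(e)}\right) = \int_0^1 e^{-\beta u}\,\dd u = \frac{1 - e^{-\beta}}{\beta}$ for a single edge. By independence of the $n$ edge weights appearing in a fixed tour, $\expect\left(e^{-\beta J(x \given W)}\right) = \left(\frac{1 - e^{-\beta}}{\beta}\right)^n$. Since this value is identical for every $x \in S$, and $|S| = (n-1)!/2$ in the symmetric TSP model (as stated in the preliminaries), summing over $S$ gives
\[
    \expect(Z(\beta \given W)) = |S| \left(\frac{1 - e^{-\beta}}{\beta}\right)^n = \frac{(n-1)!}{2}\left(\frac{1 - e^{-\beta}}{\beta}\right)^n,
\]
as claimed. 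One should note the removable singularity at $\beta = 0$, where the expression is understood by continuity to equal $(n-1)!/2$; this causes no difficulty since we only ever use the formula for $\beta > 0$.

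There is essentially no obstacle here: the only subtlety worth a sentence is the justification for exchanging expectation and the finite sum over $S$ (immediate, as $S$ is finite) and the fact that distinct edges of a single tour really do receive independent weights under $\mu$, which is exactly the definition of the random instance model. The formula for $|S|$ is the standard count of Hamiltonian cycles on the complete graph $K_n$ under the identification of a cycle with its reverse, already quoted in the preliminaries for the 2-opt state graph.
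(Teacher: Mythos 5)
Your proposal is correct and follows essentially the same route as the paper's own proof: factor $e^{-\beta J(x\given W)}$ over the $n$ independent edge weights of a fixed tour, evaluate $\expect(e^{-\beta w(e)}) = (1-e^{-\beta})/\beta$, and sum over the $|S| = (n-1)!/2$ tours by linearity of expectation. The extra remarks on the removable singularity at $\beta = 0$ and the finiteness of $S$ are harmless additions but not needed.
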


\equilibriumcost

\begin{proof}
    From \Cref{lemma:partition_function}, we have 
    \[
        \ln\expect(Z(\beta \given W)) = n\ln\left(\frac{1 - e^{-\beta}}{\beta}\right)
            + \Theta(n\log n).
    \]
    Differentiating to $\beta$,
    \[
        \frac{\mathrm{d}}{\mathrm{d}\beta}\ln \expect(Z(\beta \given W))
            = n\left(\frac{1}{e^\beta - 1} - \frac{1}{\beta}\right).
    \]
    Hence, 
    \begin{align}\label{eq:cost}
       -\frac{\mathrm{d}}{\mathrm{d}\beta} \ln \expect(Z(\beta\given W))
            = n\left(\frac{1}{\beta} - \frac{1}{e^\beta - 1}\right).
    \end{align}
    The result for $\expect(J)$ now follows from \Cref{lemma:j_lower_bound},
    together with \Cref{cor:mean_variance}. The result for $\expect(\sigma^2(W))$
    is similarly obtained; to wit,
    \begin{align}\label{eq:variance}
        \expect(\sigma^2(W)) \leq
            n\left(\frac{1}{\beta^2} - \frac{1}{e^\beta - 1}
                - \frac{1}{(e^\beta-1)^2}\right) \leq \frac{n}{\beta^2}.
    \end{align}
\end{proof}

In light of the cooling schedule of \Cref{eq:logcooling}, we define
$J(t)$ as the tour length drawn from $\pi_\beta^q$ with parameter
$\beta = T(t)^{-1}$.
Then \Cref{lemma:cost} yields the following corollary.

\equilibriumcostcorollary

\begin{remark}
    Aarts and Korst \cite[Postulate 2.1]{aarts_simulated_1989} proposed a distribution for the
    values of the cost function in a typical combinatorial optimization problem. Using this
    postulate, they derived asymptotic behavior for the expected value of the cost function 
    and its variance for small and large  temperatures. In particular, they concluded the
    following.
    \begin{itemize}
        \item For high temperatures, the expected cost is linear in $T^{-1}$ and
            the variance is constant.
        \item For low temperatures, the expected cost is linear in $T$
            and the variance is proportional to $T^2$.
    \end{itemize}
    Here, the problem size $n$ is taken as a constant, so that the asymptotic behavior is
    understood to be with respect to $T = 1/\beta$.
    With the results obtained in the proof of \Cref{lemma:cost}, we find that the heuristic results
    obtained
    by Aarts and Korst~\cite{aarts_simulated_1989}
    agree with our lower and upper bounds.
    
    Using \Cref{eq:cost}, for small $T$ we obtain
    $\expect(J) = \Omega(n/\beta) = \Omega(nT)$,
    so the cost is indeed lower bounded by a function linear in $T$ in this regime.
    For large
    $T$, we perform a Maclaurin expansion of
    \Cref{eq:cost} up to first order in $\beta$ to find
    \[
        \expect(J) \geq 
        n \left(\frac{1}{\beta} - \frac{1}{e^{\beta} -1}\right)
        = n\left(
                \frac{1}{2} - \frac{\beta}{12} + O(\beta^2) 
        \right) = n\left(\frac{1}{2} - \frac{1}{12T} + O\left(\frac{1}{T^2}\right)\right).
    \]
    We conclude that in this regime, the expected cost is lower-bounded by a function
    asymptotically linear in $T^{-1}$.
    
    As for the variance, we have from \Cref{lemma:cost} that $\expect(\sigma^2(W) = O(nT^2)$
    for small $T$. For large $T$, we now perform a Maclaurin expansion of
    \Cref{eq:variance} to find
    \[
        \expect(\sigma^2(W)) \leq n\left(\frac{1}{12} + O\left(\frac{1}{T^2}\right)\right),
    \]
    showing that for large $T$, the variance is bounded by a constant.
\end{remark}

\section{Tour Length Outside Equilibrium}\label{sec:mixing}

The results obtained in the previous section hold when the Markov chain that models simulated
annealing is in equilibrium. As this does not hold exactly for
any finite
$t > 0$, we now turn
to a non-equilibrium analysis. A central concept in this field is the
\emph{mixing time} of a Markov chain. Here we simply state the required definitions; for a comprehensive
overview, see Levin \& Peres \cite{aldous_markov_2019}. 

Let $P$ denote the transition matrix of any Markov chain with state space $S$. 
We denote the stationary distribution of this matrix by
$\pi$. Then we define the function
\[
    d(t) = \sup_{\nu \in \mathcal{P}} \|\nu P^t - \pi\|_\mathrm{TV},
\]
where $\|\cdot\|_{\mathrm{TV}}$ denotes the total variation distance \cite{aldous_markov_2019}
and $\mathcal{P}$ is the set of all probability measures on $S$.

In terms of $d(t)$, the mixing time of the Markov chain defined by $P^t$ is
\[
    \tau(\epsilon) := \min\{ t \given d(t) \leq \epsilon \}.
\]
Intuitively, the mixing time quantifies the number of applications of $P$ sufficient to
bring any probability measure $\epsilon$-close to $\pi$ in the total variation sense.

We are now ready to state a bound on the mixing time of the Markov chain defined
by simulated annealing.

\begin{restatable}{lemma}{mixingtime}\label{lemma:mixing_time}
    Assume that the state graph of the Markov chain defined by
    Algorithm \ref{sa_def} is $d$-regular, has diameter $D$,
    and is vertex-transitive. Suppose the Markov chain of simulated annealing is held constant at
    inverse temperature $\beta_t = \ln t / a$, where $a \geq J_{\max} := \max_{x \in S} J(x)$. Then
    \[
        \tau(\epsilon) \leq 128d^2D^4 t^2\ln\left(
            \frac{|S|t}{\epsilon} 
        \right).
    \]
\end{restatable}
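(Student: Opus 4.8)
The plan is to bound the mixing time of the fixed-temperature chain $P_{\beta_t}$ via a spectral gap estimate, translating the gap into a total-variation mixing bound by the standard inequality $\tau(\epsilon) \le \lambda^{-1}\ln(1/(\epsilon\,\pi_{\min}))$, where $\lambda$ is the absolute spectral gap and $\pi_{\min} = \min_{x\in S}\pi_{\beta_t}(x\mid w)$. Since the state graph is undirected, $d$-regular, and $P_{\beta_t}$ is reversible with respect to $\pi_{\beta_t}$, all eigenvalues are real, so it suffices to bound the gap $1-\lambda_2$ away from $0$ and to check that the smallest eigenvalue is not too close to $-1$ (the self-loop probabilities coming from rejected moves make the chain sufficiently lazy, so the negative eigenvalues are harmless; a factor like the holding probability at each state handles this). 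The two remaining ingredients are then a lower bound on $\pi_{\min}$ and a lower bound on the Poincar\'e constant $1-\lambda_2$.

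For $\pi_{\min}$: from $\pi_{\beta_t}(x\mid w) = e^{-\beta_t J(x\mid w)}/Z(\beta_t\mid w)$ and $0 \le J(x\mid w) \le J_{\max} \le a$, together with $Z(\beta_t\mid w) \le |S|$, we get $\pi_{\min} \ge e^{-\beta_t J_{\max}}/|S| \ge e^{-\beta_t a}/|S| = t^{-J_{\max}/a}/|S| \ge 1/(|S|\,t)$, using $\beta_t = \ln t/a$ and $J_{\max}\le a$. This produces the $\ln(|S|t/\epsilon)$ factor in the statement. For the spectral gap I would use a canonical-paths / congestion argument (Sinclair, Diaconis--Stroock): for each ordered pair $(x,y)$ route a path $\gamma_{xy}$ in the state graph of length at most the diameter $D$, and bound the congestion
\[
    \varrho = \max_{(u,v)\in A} \frac{1}{\pi_{\beta_t}(u)P_{\beta_t}(u,v)} \sum_{\gamma_{xy}\ni (u,v)} \pi_{\beta_t}(x)\pi_{\beta_t}(y)|\gamma_{xy}|,
\]
which gives $1-\lambda_2 \ge 1/\varrho$. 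Each edge transition probability satisfies $P_{\beta_t}(u,v) \ge \frac{1}{d}e^{-\beta_t(J(v)-J(u))_+} \ge \frac{1}{d}e^{-\beta_t J_{\max}} \ge \frac{1}{d}\cdot\frac{1}{t}$, and $\pi_{\beta_t}(u) \ge 1/(|S|t)$, while $\sum_{\gamma_{xy}\ni(u,v)}\pi_{\beta_t}(x)\pi_{\beta_t}(y)|\gamma_{xy}| \le D\sum_{x,y}\pi_{\beta_t}(x)\pi_{\beta_t}(y) = D$; combining, $\varrho \le d\,|S|\,t^2 D$. Here is where vertex-transitivity enters: it lets one choose the paths symmetrically (a "uniform" set of canonical paths, e.g.\ via a transitive group action) so that the number of paths through any given edge is controlled uniformly, saving the extra $|S|$ or $D$ factors that a naive bound would incur and instead giving congestion on the order of $d\,D^2\,t^2$ after a more careful count; chasing the exact constants and the $D^4$ versus $D^2$ dependence is the step that needs care and is the main obstacle. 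Finally, I would assemble $\tau(\epsilon) \le \varrho\ln(1/(\epsilon\pi_{\min})) \lesssim d\,D^?\,t^2\ln(|S|t/\epsilon)$, adjust the path construction and the laziness bookkeeping to land on the claimed $128 d^2 D^4 t^2 \ln(|S|t/\epsilon)$, and check that the $a \ge J_{\max}$ hypothesis is exactly what makes all the $e^{-\beta_t J_{\max}} \ge 1/t$ steps valid.

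The main obstacle, as noted, is getting the polynomial in $d$ and $D$ to match the stated $d^2 D^4$ (as opposed to a weaker bound); this likely requires either a two-step canonical-paths construction (route through an intermediate "hub" state, or concatenate shortest paths, incurring a squared diameter) combined with the comparison/congestion lemma applied to edge *and* vertex congestion, or a Cheeger-type argument bounding the conductance $\Phi \ge \Omega(1/(d D^2 t))$ and then using $1-\lambda_2 \ge \Phi^2/2$, which naturally produces the square. Either route is routine in principle but the constant-chasing is where the real work lies.
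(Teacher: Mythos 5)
Your skeleton is right — relaxation time times $\ln(1/(\epsilon\pi_{\min}))$, and your bound $\pi_{\min}\geq e^{-\beta_t J_{\max}}/|S|\geq 1/(|S|t)$ is exactly the paper's — but the route you actually develop for the spectral gap has a genuine gap that you flag but do not close. Your canonical-paths congestion bound $\varrho \leq d\,|S|\,t^2 D$ carries a factor $|S| = (n-1)!/2$, which is exponentially large and cannot simply be "saved" by choosing paths symmetrically: the $1/\pi(u)$ in the congestion denominator is only cancelled if the total $\pi\times\pi$ mass routed through each edge is $O(\pi(u)P(u,v)\cdot\text{poly})$, and for a Gibbs measure at positive $\beta$ there is no uniform way to guarantee this without essentially redoing a conductance argument. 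Your parenthetical second option — Cheeger, $\gamma \geq \Phi^2/2$ with $\Phi = \Omega(1/(dD^2t))$, which "naturally produces the square" — is the paper's actual proof, but you leave unproved precisely the claim $\Phi \geq \Omega(1/(dD^2 t))$, which is the heart of the lemma.

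The paper gets that claim in two steps you are missing. First, a comparison to the infinite-temperature chain (Nolte--Schrader): the bottleneck ratio at inverse temperature $\beta_t$ satisfies $\Phi(t) \geq \Phi_0\, t^{-J_{\max}/a} \geq \Phi_0/t$, where $\Phi_0$ is the bottleneck ratio of the simple random walk on the state graph. Second — and this is where vertex-transitivity actually enters, not in balancing canonical paths — the simple random walk on a transitive $d$-regular graph of diameter $D$ has spectral gap $\gamma \geq 1/(2dD^2)$ (Levin \& Peres, Thm.~13.26), and the easy direction of Cheeger gives $\Phi_0 \geq \gamma/2 \geq 1/(4dD^2)$. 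Chaining these with the hard direction of Cheeger yields $\gamma_\star \geq \Phi(t)^2/2 \geq 1/(128 d^2D^4t^2)$ (the extra factor of $2$ comes from passing to the lazy chain $\tfrac12(I+P)$, whose bottleneck ratio is half the original's). On that last point, your assertion that rejected moves make the chain "sufficiently lazy" is not justified: at high temperature nearly all proposals are accepted and the holding probability can be zero, so the smallest eigenvalue is not controlled for free. The paper sidesteps this by explicitly analyzing the lazy chain and using that its mixing time upper-bounds that of the original chain.
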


This bound on the mixing time will allow us to control the non-equilibrium distribution of simulated annealing.
However, the quantity we are truly interested in is the expected value $\expect(J)$. Since the
logarithmic cooling schedule decreases the temperature very gradually, it seems likely
that this expected value never deviates far from its equilibrium value at a given temperature,
provided that the chain is allowed to mix sufficiently before the temperature is changed.
The following lemma formalizes this notion.
     
\begin{restatable}{lemma}{momentsnonequilibrium}\label{lemma:moments_nonequilibrium}
    Let $\nu_r = \nu_0 P_{t}^r$, where $P_{t}$ is the transition matrix
    of simulated annealing at temperature $a / \ln t$, with
    $a \geq n$ and $t = 2^{o(n)}$.
    If $\|\nu_0 - \pi_t\|_\mathrm{TV} \leq \frac{1}{|S|}$, then 
    \[
        \left|\expect_{\nu_r}( J \given W) - \expect_{\pi_{t+1}} (J \given W)\right|
            = O\left(\frac{n}{\sqrt{t}}\right).
    \]
    Moreover, the lower bound of \Cref{corollary:cost} holds also for
    $\expect_\mu(\expect_{\nu_r}(J \given W))$.
\end{restatable}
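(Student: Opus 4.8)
The plan is to bound the deviation by a triangle inequality, splitting it into a \emph{mixing} term $|\expect_{\nu_r}(J \given W) - \expect_{\pi_t}(J \given W)|$ and a \emph{temperature-drift} term $|\expect_{\pi_t}(J \given W) - \expect_{\pi_{t+1}}(J \given W)|$, and to bound each uniformly over $W$ and over $r \geq 0$ (I write $\beta_s = \ln s / a$, so $\pi_s = \pi_{\beta_s}$). For the mixing term I would use two elementary facts. First, the total variation distance to a stationary distribution is non-increasing under one application of the transition matrix, so $\|\nu_r - \pi_t\|_\mathrm{TV} \leq \|\nu_0 - \pi_t\|_\mathrm{TV} \leq 1/|S|$ for every $r \geq 0$; in particular \Cref{lemma:mixing_time} is not needed for this lemma itself, but only to verify the hypothesis $\|\nu_0 - \pi_t\|_\mathrm{TV} \leq 1/|S|$ in the application. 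Second, for any probability measures $\nu,\pi$ on $S$ and any $f : S \to \mathbb{R}$ one has $|\expect_\nu(f) - \expect_\pi(f)| \leq (\max_x f(x) - \min_x f(x))\,\|\nu-\pi\|_\mathrm{TV}$ (shift $f$ by a constant, then use $\sum_x|\nu(x)-\pi(x)| = 2\|\nu-\pi\|_\mathrm{TV}$). Since every tour has $n$ edges of weight at most $1$, $J(\cdot\given W)$ takes values in $[0,n]$, so the mixing term is at most $n/|S|$; as $|S| = (n-1)!/2$ while $\sqrt{t} = 2^{o(n)}$, this is $o(n/\sqrt{t})$.

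For the temperature-drift term, \Cref{cor:mean_variance} gives $\frac{\partial}{\partial\beta}\expect_{\pi_\beta}(J\given W) = -\frac{\partial^2}{\partial\beta^2}\ln Z(\beta\given W) = -\mathrm{Var}_{\pi_\beta}(J\given W)$. Integrating over $\beta\in[\beta_t,\beta_{t+1}]$ and using Popoviciu's inequality $\mathrm{Var}_{\pi_\beta}(J\given W)\leq n^2/4$ (valid for every $W$, again because $J\in[0,n]$) yields $|\expect_{\pi_t}(J\given W) - \expect_{\pi_{t+1}}(J\given W)| \leq \frac{n^2}{4}(\beta_{t+1}-\beta_t)$. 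Now $\beta_{t+1}-\beta_t = \ln(1+1/t)/a \leq 1/(at) \leq 1/(nt)$ by the hypothesis $a\geq n$, so this term is at most $n/(4t) = O(n/\sqrt{t})$, and adding the two contributions proves the first claim. For the ``moreover'', I would note that the mixing bound holds pointwise in $W$, so $\expect_{\nu_r}(J\given W) \geq \expect_{\pi_t}(J\given W) - n/|S|$ for every $W$; taking $\expect_\mu$ of both sides gives $\expect_\mu(\expect_{\nu_r}(J\given W)) \geq \expect(J(t)) - n/|S|$, where $\expect(J(t)) = \expect_\mu(\expect_{\pi_{\beta_t}}(J\given W))$ is exactly the quantity bounded below in \Cref{corollary:cost}, and the super-polynomially small correction $n/|S|$ does not affect that bound.

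I do not anticipate a genuine obstacle: all the ingredients — contractivity of stochastic matrices in total variation, the expectation-versus-total-variation inequality, \Cref{cor:mean_variance}, and Popoviciu's inequality — are standard or already available above. The only points needing a little care are making the comparison $n/|S| = o(n/\sqrt{t})$ precise from $t = 2^{o(n)}$, and observing that the hypothesis $a\geq n$ enters precisely in the drift bound (any $a = \Omega(n)$ would already keep that term $O(n/\sqrt{t})$). One could also remark that the proof in fact delivers $O(n/t)$; stating the bound as $O(n/\sqrt{t})$ is harmless and presumably convenient elsewhere.
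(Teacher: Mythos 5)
Your proof is correct, but it follows a genuinely different route from the paper's. The paper works in the weighted norm $\|u\|_{t} = \|\Pi_t^{-1/2}u\|_2$: it bounds $|\expect_{\nu_r}(J\given W)-\expect_{\pi_{t+1}}(J\given W)| \leq n\|\nu_r-\pi_{t+1}\|_{t+1}$, shows $\|P_t\|_t=1$ via reversibility (a symmetrization argument), and then splits $\|\nu_0-\pi_{t+1}\|_{t+1}$ into $\|\pi_{t+1}-\pi_t\|_{t+1}\leq\sqrt{1/t}$ (a bound of Nolte and Schrader) plus a term controlled by converting the total-variation hypothesis into the weighted norm at a cost of $\sqrt{|S|(t+1)}$. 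You instead split at the level of expectations, $\nu_r\to\pi_t\to\pi_{t+1}$, handling the first leg with total-variation contractivity plus the oscillation bound $|\expect_\nu f-\expect_\pi f|\leq(\max f-\min f)\|\nu-\pi\|_\mathrm{TV}$, and the second leg with the thermodynamic identity $\partial_\beta\expect_{\pi_\beta}(J\given W)=-\mathrm{Var}_{\pi_\beta}(J\given W)$ from \Cref{cor:mean_variance} together with Popoviciu's inequality. Your argument buys several things: it is more elementary (no spectral/reversibility input beyond stationarity of $\pi_t$ for $P_t$, and no external citation for the $\pi_t$-versus-$\pi_{t+1}$ comparison); it yields the stronger bound $O(n/t)$; and it applies to $\nu_r=\nu_0P_t^r$ exactly as stated, whereas the paper's chain of inequalities tacitly writes $\nu_r-\pi_{t+1}=(\nu_0-\pi_{t+1})P_{t+1}^r$, i.e.\ evolves under $P_{t+1}$ rather than $P_t$. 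The paper's weighted-norm machinery is more general in that it would also control higher moments via $\sqrt{\langle J^2\rangle_{\pi_{t+1}}}$, but for this lemma your approach suffices. One cosmetic point you already flag, and which the paper shares: the ``moreover'' part strictly holds only up to a super-polynomially small additive correction ($n/|S|$ in your version, $O(n/\sqrt t)$ in the paper's), so your treatment is, if anything, the more careful of the two.
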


We are then led to the main result of this section.

\nonequilibriumcost

\begin{proof}
    
    Note first that the 2-opt state graph has a diameter of at most $n-1$ \cite{michiels_properties_2007} and
    is $\Theta(n^2)$-regular. \Cref{lemma:mixing_time} then gives us an upper bound for the mixing
    time of simulated annealing. 
    
    Let the mixing time at epoch    $t$ (and thus at a temperature of
    $a / \ln t$) be given by
    $\tau_t(\epsilon)$. We have
    \[
        \tau_t(\epsilon) \leq O(n^8 t^2) \ln\left(
            \frac{(n-1)!t}{2\epsilon} 
        \right) 
        \leq O(n^8 t^2) \left(
            n \log n + \log t + \log \frac{1}{\epsilon}
        \right).
    \]
    As $t = 2^{o(n)}$, we have $\log t = o(n)$. We can then choose
    $\epsilon = \frac{1}{|S|}$, so that $\log \frac{1}{\epsilon} = O(n \log n)$. This yields
    \begin{align*}
        \tau_t\left(\frac{1}{|S|}\right) = O\left(n^9t^2\log n\right).
    \end{align*}
    
    Hence, keeping the Markov chain at a constant temperature for $\Omega(n^9 t^2 \log n)$ steps
    suffices to bring the distribution within $\frac{1}{|S|}$ of the stationary distribution
    at the given temperature. We are then in the setting of \Cref{lemma:moments_nonequilibrium},
    which implies the result.
\end{proof}

\section{Lower Tail Bound}\label{sec:tailbound}

\Cref{thm:lb_nonequilibrium} tells us that the simulated annealing version under consideration here
will, in expectation, yield a tour of size linear in $n$ for $t = 2^{o(n)}$.
Taken together with the fact that the optimal tour is $\Theta(1)$ with high
probability \cite{engels_average-case_2009,frieze_random_2004},
this suggests that the algorithm stays
far from the global optimum for $t = 2^{o(n)}$. To strengthen this statement, we would like
to obtain a lower tail bound on the tour length resulting from executing SA for $t$ iterations.

Let $J(t)$ denote the length of a tour obtained on running simulated annealing for $t$ iterations
on a random instance. Informally speaking, if we could show that
$\prob\left(J(t) \leq (1 - \epsilon)\expect(t)\right)$ decreases exponentially quickly in $n$
for fixed $t$,
this would imply that the optimum is encountered
with probability $o_n(1)$ for $t = 2^{o(n)}$.

\subsection{Quenching versus Annealing}\label{sec:quenched_vs_annealed}

Recall the definitions
of the quenched and annealed distributions
of \cref{eq:quenched,eq:annealed}.
Given the physical interpretation of $\pi^a_\beta$ and $\pi^q_\beta$, one
might expect $\pi^a_\beta$ to tend towards lower values of $J$. Indeed,
we have previously shown this to hold in expectation (\Cref{lemma:j_lower_bound}). The question then
arises whether we can show a stronger relationship between these distributions.
In particular, a lower tail bound on $J$ would be straightforward to derive
if we could show that $\pi^q_\beta$ stochastically dominates $\pi^a_\beta$.
This of course hinges on the form of $\rho$. If $\mu$ is the uniform
distribution on $[0,1]^m$ as in the previous sections,
then $\rho$ is exactly the Irwin-Hall distribution;
in this case, the problem becomes analytically tractable.

We conjecture the following:
\stochdomconj
It is straightforward to show that this conjecture holds for $n = 3$. Admittedly, this
gets rid of most of the difficulty, since $\pi^q_\beta$ simply reduces to $\rho$. However,
$\pi^a_\beta$ is somewhat nontrivial, so we include the proof in the appendix.
\begin{restatable}{lemma}{stochdom}
    \Cref{conjecture:stochdom} holds for $n = 3$. 
\end{restatable}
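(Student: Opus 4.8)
The plan is to reduce everything to an explicit comparison of two distributions on the interval $[0,3]$, exploiting the fact that for $n=3$ the quenched distribution degenerates. Since every $3$-node complete graph has exactly one Hamiltonian cycle (up to orientation), the solution set $S$ is a singleton, so $\pi^q_\beta(\cdot \given W)$ puts all its mass on the single tour length $J$, and averaging over $W$ gives $\pi^q_\beta = \rho$, the Irwin–Hall density for the sum of three independent $U[0,1]$ variables. Thus $J_q$ has the triangular-spline density on $[0,3]$, while $J_a \sim \pi^a_\beta$ has the tilted density $\pi^a_\beta(j) = e^{-\beta j}\rho(j)/\int_0^3 e^{-\beta j'}\rho(j')\,\mathrm{d}j'$. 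Stochastic dominance $J_q \succeq J_a$ is equivalent to $F_q(x) \le F_a(x)$ for all $x \in [0,3]$, where $F_q, F_a$ are the respective CDFs.

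First I would write down $\rho$ explicitly as a piecewise polynomial: $\rho(j) = j^2/2$ on $[0,1]$, $\rho(j) = (-2j^2 + 6j - 3)/2$ on $[1,2]$, and $\rho(j) = (3-j)^2/2$ on $[2,3]$, with the corresponding closed form for $F_q$. Then I would observe the standard fact that an exponential tilt of any distribution is stochastically dominated by the original distribution when the tilt favors small values — i.e. the likelihood ratio $\pi^a_\beta(j)/\rho(j) \propto e^{-\beta j}$ is monotonically decreasing in $j$. Monotone likelihood ratio implies stochastic dominance, so in fact $\rho \succeq \pi^a_\beta$ holds for \emph{any} base density $\rho$ and any $\beta \ge 0$, by the usual coupling/MLR argument. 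Concretely: for densities $f, g$ on $\mathbb{R}$ with $f/g$ nonincreasing (where defined) and $\int f = \int g = 1$, one has $\int_{-\infty}^x f \ge \int_{-\infty}^x g$ for all $x$; this follows because $f - g$ changes sign at most once, from $+$ to $-$, as $x$ increases, forcing $F_f - F_g$ to rise then fall while starting and ending at $0$, hence staying nonnegative.

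The key steps, in order: (i) argue $\pi^q_\beta = \rho$ for $n = 3$ by the singleton structure of $S$; (ii) record that $\pi^a_\beta$ is the exponential tilt of $\rho$ with negative tilt parameter, so the likelihood ratio $\pi^a_\beta/\rho \propto e^{-\beta j}$ is nonincreasing on $[0,3]$; (iii) invoke (or prove in one line via the sign-change argument) the MLR $\Rightarrow$ first-order stochastic dominance lemma to conclude $\rho \succeq \pi^a_\beta$, i.e. $J_q \succeq J_a$. Since none of this actually uses the piecewise form of $\rho$, the explicit Irwin–Hall computation is optional; I would include it only to make the statement self-contained and to double-check the normalization $\int_0^3 e^{-\beta j}\rho(j)\,\mathrm{d}j$ is finite and positive (which it trivially is on a compact interval).

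There is essentially no hard obstacle here — that is precisely why the conjecture is easy for $n=3$ and resists generalization. The only mild care needed is in step (i): one must confirm that the quenched distribution, as defined in \Cref{eq:quenched}, really does collapse to $\rho$ when $|S| = 1$ — the inner ratio $e^{-\beta j}\rho(j\given W)/\int e^{-\beta j'}\rho(j'\given W)\,\mathrm{d}j'$ with $\rho(\cdot\given W)$ a single Dirac mass equals that same Dirac mass regardless of $\beta$, and integrating against $\mu$ returns $\rho$. For $n > 3$ this step fails completely: $\rho(\cdot \given W)$ is supported on $(n-1)!/2 > 1$ points, the inner Boltzmann reweighting genuinely depends on $W$ and on $\beta$, and the dominance has to survive the outer average over $W$ — which is the real content of \Cref{conjecture:stochdom} and the reason it is only conjectured.
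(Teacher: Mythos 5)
Your proposal is correct and follows essentially the same route as the paper: both reduce the $n=3$ case to the observation that $|S|=(n-1)!/2=1$ forces $\pi^q_\beta=\rho$, and then show that the negatively tilted density $\pi^a_\beta \propto e^{-\beta j}\rho(j)$ is stochastically dominated by $\rho$. The only (cosmetic) difference is in the last step, where you invoke the monotone-likelihood-ratio/single-crossing argument while the paper uses the equivalent inequality $\expect_\rho(e^{-\beta J}\given J\le j)\ge\expect_\rho(e^{-\beta J})$ together with the explicit normalization from \Cref{lemma:partition_function}.
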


Unfortunately, the proof does not easily extend to $n > 3$. Thus, we devote
the remainder of this section to accumulating numerical evidence that
\Cref{conjecture:stochdom} holds for larger $n$ as well.

As a final remark, we note that a proof of \Cref{conjecture:stochdom}
would immediately imply the $k = 1$ case of \Cref{lemma:j_lower_bound}.

\subsection{Numerical Results}

In order to gather data to support \Cref{conjecture:stochdom}, we must sample from
$\pi^q_\beta$ and $\pi^a_\beta$ at some fixed
value of $\beta$ and $n$, and compare the empirical CDF obtained from
these data. An expedient way to do this is by the
Metropolis-Hastings (MH) algorithm \cite{metropolis_equation_1953}.
This is equivalent to running
Algorithm \ref{sa_def} with a constant cooling schedule.

While we have thus far considered edge weights distributed
according to $U[0,1]^m$, we take a slightly different approach here,
which is more convenient for these numerical experiments. We define
$\mu$ as the uniform distribution over $\mathcal{W}^m$, where
$\mathcal{W} = \{i/N\}_{i=0}^N$,
with $N \in \mathbb{N}$. Note that $\mu$ converges to
$U[0,1]^m$ as $N \to \infty$.

For $\pi^q_\beta$, we generate
samples by first generating edge weights from $\mu$. We then run MH
for enough iterations such that the chain is sufficiently mixed.
At the end of this process, we output a single tour length, thereby obtaining one
sample from $\pi^q_\beta$. We then repeat this process for as many realizations
of the edge weights as we deem necessary.

For $\pi^a_\beta$, we similarly generate random edge weights from $\mu$, and sample using MH.
At every step of the algorithm, we choose an edge $e \in E$ uniformly at random,
and either increase or decrease its weight by $1/N$; this way, we allow
the edge weight vector to move within $\mathcal{W}^m$. Subsequently,
a 2-opt step is applied, and the usual MH acceptance step is performed. We
allow this Markov chain to mix sufficiently, and output a sample
once every $B$ iterations, where $B$ is an integer large enough such that the
samples are approximately independent.

Since we assume throughout that
$t = 2^{o(n)}$, and we are interested in the cooling schedule
of \Cref{eq:logcooling} with $a \geq n$, we have
$\beta = o_n(1)$. For this reason, we choose a constant value of $\beta$
here.
Performing these experiments for $n = 100$ and $n = 500$, with $N = 50$, yields
\Cref{fig:n100_500cdf}. For this experiment, we set $B = 10^4$,
collecting $10^4$ samples in total from $\pi_\beta^a$ and
the same number from $\pi_\beta^q$.
\Cref{fig:n100_500cdf} is suggestive of a stochastic dominance relationship
between $\pi^a_\beta$ and $\pi^q_\beta$. 

\begin{figure}[t]
    \centering
    \includegraphics[width=\textwidth]{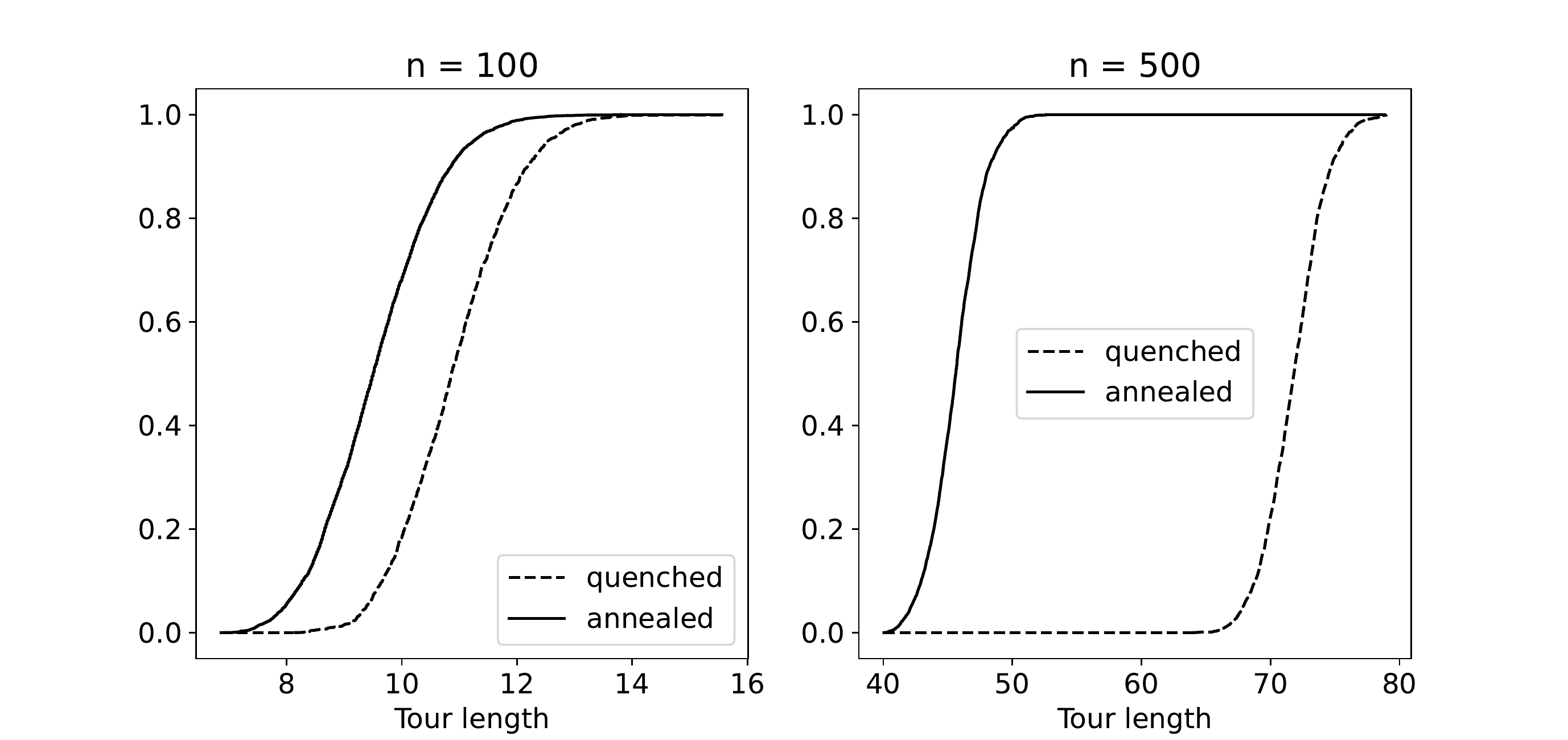}
    \caption{Empirical cumulative distribution functions of $\pi_\beta^q$ and
    $\pi_\beta^a$, for $\beta = 10$ at $n = 100$ and $n = 500$ with $N = 50$.}
    \label{fig:n100_500cdf}
\end{figure}

\section{Discussion \& Open Problems}

Our results highlight a connection between the physics of disordered systems, and
the probabilistic analysis of simulated annealing. To be precise,
in order to analyze the behavior of simulated annealing on random instances,
it is natural to define a Gibbs measure which depends on random parameters.
This is the exact same situation which occurs in disordered systems.
Hence, we believe that advances in the rigorous understanding of the statistical
mechanics of disordered systems will be advantageous to algorithms research.

To be more specific, our results hint that simulated annealing with
the logarithmic cooling schedule is not an efficient
solution method for TSP, even when compared with exact algorithms.
For simulated annealing, our results
hold for $t = 2^{o(n)}$, while
the well-known Bellman-Held-Karp algorithm
\cite{bellman_dynamic_1962,held_dynamic_1962}
has time complexity $O^*(2^n)$. 
Thus,
based on our results, we believe that
SA with logarithmic cooling cannot improve on this complexity, even
to a constant approximation ratio, although we lack a formal proof.
The logarithmic cooling schedule was previously
show to be necessary and sufficient for SA to 
asymptotically reach the global optimum \cite{hajek_cooling_1988}.
This paper is a first step towards a proof that this nice mathematical
property should be sacrificed in practice, a fact already well established
in practice \cite[chapter 8]{aarts_local_2003}.

The connection between
disordered systems and NP-hard problems is not novel, and has been
analyzed both from a physical
(e.g.\ \cite{baskaran_statistical_1986,fu_application_1986,kirkpatrick_critical_1994})
and mathematical
\cite{talagrand_high_2001,talagrand_spin_2003} perspective. However, much
of the research in physics has focused on the limits as
$\beta \to \infty$ and $n \to \infty$, referred to in that field
as the zero-temperature and thermodynamic limits. Research in mathematics
has focused mostly on spin glass systems, $k$-SAT, the
random assignment problem, and perceptrons.
From that perspective, the main novelty of our results is as a rigorous
statement on random TSP instances in the framework of disordered systems,
for finite $n$ and at finite temperature, which is of more
practical interest to the algorithms community than the aforementioned
limiting cases.

To conclude, we list some possible directions for future research on
the present topic.

\paragraph*{Lower tail bound out of equilibrium.}
We have already highlighted one open problem (\Cref{conjecture:stochdom}),
which would strengthen our results.
Even so, this conjecture is only a statement
on the equilibrium behavior of simulated annealing. If it can be shown to
hold, an obvious extension would be to the out-of-equilibrium situation,
which is more directly applicable to simulated annealing. This seems to
be a challenging problem, as it involves the time-dependent distribution
of an inhomogeneous Markov chain. Any rigorous results in this direction
may also yield insight into nonequilibrium statistical mechanics.

\paragraph*{Probabilistic models.}
Another clear open problem would be to analyze simulated annealing for
different probabilistic TSP models, especially models with dependence
between the edge weights. Physicists have previously considered
the Euclidean TSP, but rigorous results do not appear
to be known. The addition of dependence between edge weights would
make our proof of \Cref{lemma:partition_function} invalid, so this extension
would require more advanced proof techniques than the straightforward
analysis we have employed.

\paragraph*{Relaxed mixing requirements.}
In \Cref{sec:mixing} we have made some effort to prove statements
on simulated annealing outside of equilibrium. The results we have found are however
quite restrictive, requiring the Markov chain to be close to
equilibrium at all times, and moreover setting
strong limits on the cooling schedule. Proving an equivalent
to \Cref{lemma:moments_nonequilibrium} without the assumption
that $a \geq n$ is another challenge we have not addressed.

\paragraph*{Upper bound on the average case runtime.} While we have
in this work set out towards an upper bound on the average case runtime,
the reverse problem is also of interest. Since the known
worst case results are no improvement over
simply picking a tour uniformly at random
\cite{aarts_simulated_1989,hajek_cooling_1988,rajasekaran_convergence_1990},
we expect that it should be possible to show a better upper bound
in the average case.

\paragraph*{Application to different problems.}
Finally, some of our results are quite agnostic with respect to the
problem one analyzes. Replacing \Cref{lemma:partition_function} with
the partition function of any other problem would yield results applied
to that problem. It may thus be interesting to perform this
substitution for different problems, and identify problems
for which the negative results suggested by \Cref{corollary:cost,thm:lb_nonequilibrium}
are not applicable. These are then problems for which
the runtime lower bound we set out to prove may not hold.
Thus it may be possible to apply
simulated annealing with the logarithmic cooling schedule efficiently
to these problems.

\printbibliography

\appendix

\section{Proofs Omitted in Section \ref{sec:equilibrium}}

\cumulants*

\begin{proof}
    We start by noting that $Z(\beta \given W)$ is related to the moment generating function
    $M_J(s \given \cdot)$ of the random variable $J$. To wit,
    \[
        M_J(s \given W) := \expect_{\pi_\beta}\left(
            e^{sJ} 
        \right) = \sum_{x \in S} e^{sJ(x \given W)} \pi_\beta(x \given W)
        = \frac{\sum_{x \in S} e^{(s -\beta) J(x \given W)}}{\sum_{y \in S} e^{-\beta J(y \given W)}}
        = \frac{Z(\beta - s \given W)}{Z(\beta \given W)}.
    \]
    Since the cumulant generating function $K_J(s \given W)$ is defined as $K_J(s \given W) := \ln M_J(s \given W)$, we have
    \[
        K_J(s \given W) = \ln Z(\beta - s \given W) - \ln Z(\beta \given W).
    \]
    Then we have
    \[
        \kappa_p(\beta \given W) = \left.\frac{\partial^p}{\partial s^p}K_J(s \given W)\right|_{s=0} = (-1)^p \frac{\partial^p}{\partial \beta^p} \ln Z(\beta \given W).
    \]
\end{proof}

\jlowerbound*

\begin{proof}
    For the first inequality, suppose that is false, and integrate the resulting inequality to obtain
    \begin{align*}
        \expect(\ln Z(\beta\given W)) - \expect(\ln Z(0\given W))
            > \ln \expect(Z(\beta \given W)) - \ln \expect(Z(0\given W)). 
    \end{align*}
    Note that $Z(0\given W) = |S|$, independent of the realization of $W$. Hence,
    the corresponding terms on both sides cancel, and we are left with a contradiction
    to Jensen's inequality.
    
    Now for the second inequality, we proceed similarly, integrating its negation to find
    \begin{align*}
        \frac{\mathrm{d}}{\mathrm{d}\beta} \expect(\ln Z(\beta\given W))
            &- \left.\frac{\mathrm{d}}{\mathrm{d}\beta}\expect(\ln Z(\beta \given W))\right|_{\beta = 0} \\
            &> \frac{\mathrm{d}}{\mathrm{d}\beta} \ln \expect(Z(\beta \given W)) 
                - \left.\frac{\mathrm{d}}{\mathrm{d}\beta} \ln \expect(Z(\beta \given W))
                \right|_{\beta = 0}.
    \end{align*}
    We show that the second terms on both sides are again equal. Indeed:
    \begin{align*}
        -\frac{\mathrm{d}}{\mathrm{d}\beta} \expect(\ln (Z(\beta \given W)))
            = \expect\left(
                \sum_{x \in S} \frac{J(x\given W) e^{-\beta J(x \given W)}}{Z(\beta \given W)}
            \right),
    \end{align*}
    and so
    \begin{align*}
        -\left.\frac{\mathrm{d}}{\mathrm{d}\beta} \expect(\ln (Z(\beta \given W)))\right|_{\beta = 0}
            = \expect\left(
                \frac{1}{|S|}\sum_{x \in S}J(x\given W)
            \right) = \frac{1}{|S|}\sum_{x \in S}\expect(J(x\given W)).
    \end{align*}
    Meanwhile,
    \begin{align*}
        -\frac{\mathrm{d}}{\mathrm{d}\beta} \ln \expect(Z(\beta\given W))
            = \frac{1}{\expect(Z(\beta \given W))} \expect\left(
                \sum_{x \in S} J(x\given W)e^{-\beta J(x\given W)}
            \right).
    \end{align*}
    Evaluating this at $\beta = 0$ yields the same expression once more.
    Therefore, the corresponding terms
    once more cancel, leaving us with a contradiction to the first part of the lemma.
\end{proof}

\partitionfunction*

\begin{proof}
    
    By definition, we have $Z(\beta \given W) = \sum_{x \in S} e^{-\beta J(x \given W)}$. Recall that $J(x \given W)$ is a sum of
    edge weights, namely
    \[
        J(x \given W) = \sum_{e \in E(x)} W(e),
    \]
    where $E(x)$ denotes the set of edges present in tour $x$.
    Hence, the exponential in each term of the partition function can be factored into a product
    of exponential factors:
    \[
        e^{-\beta J(x \given W)} =  \prod_{e \in E(x)} e^{-\beta W(e)}.
    \]
    Using linearity of expectation, we then obtain 
    \[
        \expect(Z(\beta \given W)) = \sum_{x \in S} \prod_{e \in E(x)} \expect\left(e^{-\beta W(e)}\right),
    \]
    which holds since the edge weights are independent random variables.
    
    These expectations are easily evaluated:
    \[
        \expect\left(e^{-\beta W(e)}\right) = \int_0^1 e^{-\beta y} \mathrm{d}y = \frac{1 - e^{-\beta}}{\beta},
    \]
    since $W(e) \sim U[0, 1]$.
    Thus,
    \[
        \expect(Z(\beta \given W)) = \sum_{x \in S} \left(\frac{1 - e^{-\beta}}{\beta}\right)^n. 
    \]
    The fact that $|S| = \frac{1}{2}(n-1)!$ completes the proof.
\end{proof}

\section{Proofs Omitted in Section \ref{sec:mixing}}

For the following few proofs, recall the definitions stated at the start
of \Cref{sec:mixing}.
The mixing time of a Markov chain is related to the spectral properties of its 
transition matrix. 
To be precise, for a transition matrix $P$ with stationary distribution
$\pi$, let
\[
    \lambda_\star = \max \{ |\lambda| : \lambda \in \Lambda(P),\, \lambda \neq 1 \}
\]
be the largest (in absolute value) eigenvalue of $P$ not associated with $\pi$. Likewise, let
\[
    \lambda_1 = \max \{ \lambda : \lambda \in \Lambda(P),\, \lambda \neq 1 \}.
\]
Then the relaxation time of the chain is defined as
\[
    \tau_\mathrm{rel} := \frac{1}{1 - \lambda_\star}.
\]
The following holds:

\begin{lemma}[\mbox{Levin \& Peres~\cite[Thm 12.4]{aldous_markov_2019}}]\label{thm:mixing_time}
    Let $P$ be the transition matrix of a reversible, irreducible Markov chain with
    state space $S$, and let $\pi_{\min} := \min_{x \in S} \pi(x)$. Then
    \[
        \tau(\epsilon) \leq \tau_\mathrm{rel}\ln\left(
            \frac{1}{\epsilon \pi_{\min}}
        \right).
    \] 
\end{lemma}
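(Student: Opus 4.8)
The plan is to prove this spectral bound by exploiting the reversibility hypothesis, which makes $P$ self-adjoint on the weighted inner-product space $L^2(\pi)$ with $\langle f, g\rangle_\pi = \sum_{x\in S}\pi(x)f(x)g(x)$. Indeed, the detailed-balance condition $\pi(x)P(x,y)=\pi(y)P(y,x)$ is exactly the statement that $P$ is symmetric with respect to $\langle\cdot,\cdot\rangle_\pi$. Self-adjointness then yields an orthonormal eigenbasis $f_1,\dots,f_{|S|}$ with real eigenvalues $1=\lambda_1>\lambda_2\geq\cdots\geq\lambda_{|S|}\geq -1$ and $f_1\equiv 1$, where irreducibility guarantees that the top eigenvalue is simple. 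First I would record the resulting spectral representation
\[
    \frac{P^t(x,y)}{\pi(y)} = \sum_{j=1}^{|S|}\lambda_j^t f_j(x)f_j(y),
\]
whose $t=0$ instance gives the completeness relation $\sum_j f_j(x)^2 = 1/\pi(x)$.

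Next I would reduce the supremum over starting measures in the definition of $d(t)$ to a maximum over point masses: since total variation distance is convex in its first argument, $d(t)=\max_{x\in S}\|P^t(x,\cdot)-\pi\|_{\mathrm{TV}}$. For a fixed start $x$, I would pass from total variation to the $L^2(\pi)$ norm via Cauchy--Schwarz,
\[
    2\,\|P^t(x,\cdot)-\pi\|_{\mathrm{TV}}
      = \sum_{y}\pi(y)\left|\frac{P^t(x,y)}{\pi(y)}-1\right|
      \leq \left(\sum_{y}\pi(y)\Big(\frac{P^t(x,y)}{\pi(y)}-1\Big)^2\right)^{1/2},
\]
and then expand the right-hand side in the eigenbasis. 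The constant mode $f_1$ cancels the $-1$, and orthonormality collapses the sum to $\sum_{j\geq 2}\lambda_j^{2t}f_j(x)^2$. Since $|\lambda_j|\leq\lambda_\star$ for every $j\neq 1$, this is at most $\lambda_\star^{2t}\sum_{j\geq 2}f_j(x)^2\leq \lambda_\star^{2t}/\pi(x)\leq \lambda_\star^{2t}/\pi_{\min}$, giving the pointwise bound $\|P^t(x,\cdot)-\pi\|_{\mathrm{TV}}\leq \lambda_\star^t/(2\sqrt{\pi_{\min}})$.

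Finally I would convert the geometric factor into the relaxation time. Using $1-\lambda_\star = 1/\tau_\mathrm{rel}$ together with $1-u\leq e^{-u}$ gives $\lambda_\star^t\leq e^{-t/\tau_\mathrm{rel}}$, hence $d(t)\leq e^{-t/\tau_\mathrm{rel}}/(2\sqrt{\pi_{\min}})$. Choosing $t=\tau_\mathrm{rel}\ln(1/(\epsilon\pi_{\min}))$ makes $e^{-t/\tau_\mathrm{rel}}=\epsilon\pi_{\min}$, so that $d(t)\leq \tfrac{1}{2}\epsilon\sqrt{\pi_{\min}}\leq\epsilon$, and by the definition of $\tau(\epsilon)$ the claimed bound follows.

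The argument is essentially routine once reversibility is in hand, so I do not expect a genuine obstacle; the only points needing care are the reduction of $d(t)$ to point masses (which uses convexity of the total variation distance, not merely the triangle inequality) and the use of the absolute-value eigenvalue $\lambda_\star$ rather than $\lambda_1$, which is precisely what allows the bound to absorb possibly negative eigenvalues near $-1$. I would also note that the stated bound is slightly lossy—one actually obtains a factor $\sqrt{\pi_{\min}}$ of slack—so no sharpness analysis is required.
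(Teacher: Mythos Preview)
Your argument is correct and is the standard spectral proof of this bound. Note, however, that the paper does not supply its own proof of this lemma: it is quoted verbatim as \cite[Thm~12.4]{aldous_markov_2019} and used as a black box in the proof of \Cref{lemma:mixing_time}. So there is no in-paper proof to compare against; what you have written is essentially the argument that appears in Levin~\&~Peres, complete with the $\sqrt{\pi_{\min}}$ slack you identified.
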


We define the \emph{spectral gap} of $P$ as $\gamma = 1 - \lambda_1$, and distinguish it from the
the \emph{absolute spectral gap}, $\gamma_\star = 1 - \lambda_\star$.
This allows us to use a result from
Sinclair \& Jerrum:

\begin{lemma}[\mbox{Sinclair \& Jerrum~\cite[Prop. 3.2]{sinclair_approximate_1989}}]\label{lemma:modified_chain}
    Let P be the transition matrix of an ergodic time-reversible Markov chain, and 
    $1 = \lambda_0 > \lambda_1 > \ldots > \lambda_{N-1} > -1$ its eigenvalues.
    Then the modified chain with transition matrix $P' = \frac{1}{2}(I_N + P)$ is also ergodic
    and time-reversible with the same stationary distribution, and its eigenvalues
    $\{\lambda_i'\}$, similarly ordered, satisfy 
    $\lambda'_{N-1} > 0$ and $\lambda'_{\max{}} = \lambda_1' = \frac{1}{2}(1 + \lambda_1)$.
\end{lemma}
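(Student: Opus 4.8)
The plan is to reduce the entire statement to one spectral observation: since $P' = \frac{1}{2}(I_N + P)$ is an affine function of $P$, it shares every eigenvector of $P$, and its spectrum is the image of the spectrum of $P$ under the affine map $\lambda \mapsto \frac{1}{2}(1+\lambda)$. Indeed, if $Pv = \lambda v$ then $P'v = \frac{1}{2}(v + \lambda v) = \frac{1}{2}(1+\lambda)v$, so the eigenvalues of $P'$ are exactly $\lambda'_i = \frac{1}{2}(1+\lambda_i)$, with the eigenspaces unchanged.

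First I would read off the spectral claims from this map. It is strictly increasing, so it preserves the given ordering and yields $\lambda'_0 > \lambda'_1 > \cdots > \lambda'_{N-1}$. Evaluating at the endpoints, $\lambda'_0 = \frac{1}{2}(1+1) = 1$, and $\lambda'_{N-1} = \frac{1}{2}(1 + \lambda_{N-1}) > 0$ because $\lambda_{N-1} > -1$ by hypothesis. This last inequality is the real content of the lemma: it shows that every non-principal eigenvalue of $P'$ is strictly positive, so the eigenvalue of largest modulus among them coincides with the largest signed one. Combined with order preservation, this gives $\lambda'_{\max} = \lambda'_1 = \frac{1}{2}(1 + \lambda_1)$.

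Next I would verify the Markov-chain properties, all of which follow from the affine form. Detailed balance passes through unchanged: off the diagonal $P'(x,y) = \frac{1}{2}P(x,y)$, so $\pi(x)P'(x,y) = \frac{1}{2}\pi(x)P(x,y) = \frac{1}{2}\pi(y)P(y,x) = \pi(y)P'(y,x)$, while the diagonal case is trivial. Hence $P'$ is reversible with respect to the same $\pi$, and in particular $\pi P' = \frac{1}{2}(\pi + \pi P) = \pi$, so the stationary distribution is unchanged. Irreducibility is inherited because $P'(x,y) > 0$ whenever $P(x,y) > 0$ for $x \neq y$, and aperiodicity is automatic since $P'(x,x) = \frac{1}{2}(1 + P(x,x)) \geq \frac{1}{2} > 0$ places a self-loop at every state; together these give ergodicity.

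I do not expect a genuine obstacle here, as this is the standard laziness construction. The only point requiring care is conceptual rather than technical: recognizing that the substance of the lemma is precisely the positivity $\lambda'_{N-1} > 0$, which is exactly what lets one replace $\lambda_\star$ by $\lambda_1$ (equivalently $\gamma_\star$ by $\gamma$) when invoking \Cref{thm:mixing_time} on the lazy chain, and thereby bound the mixing time through the ordinary spectral gap.
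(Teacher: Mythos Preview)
Your proof is correct and is the standard argument for this well-known laziness construction. Note, however, that the paper does not actually prove this lemma: it is quoted verbatim from Sinclair \& Jerrum~\cite[Prop.~3.2]{sinclair_approximate_1989} and used as a black box, so there is no ``paper's own proof'' to compare against. Your write-up supplies exactly the elementary spectral-and-reversibility verification one would expect, and your closing remark correctly identifies why the paper invokes the lemma---namely to collapse $\gamma_\star$ to $\gamma$ for the lazy chain before applying \Cref{thm:mixing_time}.
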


The modified chain in this proposition is easily constructed from the original
chain. At each iteration, one first flips a coin. On heads, the chain stays in the current state.
On tails, the usual Metropolis trial is conducted. Such a chain is called a lazy
Markov chain \cite{aldous_markov_2019}.

Given any realization of this lazy chain, we can obtain a realization of the
original Markov chain by removing certain elements from the generated sequence of states.
Since the chains share their stationary distribution, this implies that the mixing time of
the original chain is at most that of the lazy chain. In the following few lemmas,
we consider only the lazy chain.
The reason for this switch is that for the lazy chain, we have
$\gamma_\star = \gamma$,
which is easier to bound.

Finally, we quote two more results: one relating the spectral gap of a transition matrix to the structure of the
underlying state graph, and one relating the spectral gap to the \emph{bottleneck ratio} of the chain.
The bottleneck ratio of a Markov chain with transition matrix $P$ and stationary distribution $\pi$ is
defined as
\[
    \Phi_\star := \min_{\substack{S \subset V \\ \pi(S) \leq 1/2}}
        \frac{\sum_{x \in S} \sum_{y \in V \setminus S} \pi(x) P(x, y)}{\pi(S)},
\]
where $\pi(S) = \sum_{x \in S} \pi(x)$.

\begin{lemma}[\mbox{Levin \& Peres~\cite[Thm. 13.26]{aldous_markov_2019}}]\label{lemma:diameter_bound}
    Let $G$ be a transitive graph with vertex degree $d$ and diameter $D$. For the simple random walk
    on $G$,
        \[
            \frac{1}{\gamma} \leq 2dD^2. 
        \]
\end{lemma}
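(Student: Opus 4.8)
The plan is to prove the equivalent statement $\gamma \ge (2dD^2)^{-1}$ by combining the variational (Rayleigh-quotient) characterization of the spectral gap with the method of canonical paths, using vertex-transitivity only to control path congestion. Since $G$ is $d$-regular, the simple random walk is reversible with uniform stationary distribution $\pi \equiv 1/|V|$, and its spectral gap satisfies
\[
\gamma = \min_{f} \frac{\mathcal{E}(f,f)}{\mathrm{Var}_\pi(f)},
\]
the minimum ranging over non-constant $f : V \to \mathbb{R}$, where $\mathcal{E}(f,f) = \tfrac{1}{2}\sum_{x,y}\pi(x)P(x,y)(f(x)-f(y))^2$ is the Dirichlet form and $\mathrm{Var}_\pi(f) = \tfrac{1}{2}\sum_{x,y}\pi(x)\pi(y)(f(x)-f(y))^2$. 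Proving the lemma thus reduces to establishing a Poincaré inequality $\mathrm{Var}_\pi(f) \le 2dD^2\,\mathcal{E}(f,f)$ for every $f$.

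First I would fix, for each ordered pair $(x,y)$ with $x \ne y$, a geodesic $\gamma_{xy}$ in $G$ of length $|\gamma_{xy}| \le D$. Writing $f(x)-f(y)$ as a telescoping sum along $\gamma_{xy}$ and applying Cauchy-Schwarz gives $(f(x)-f(y))^2 \le D\sum_{e \in \gamma_{xy}}(\nabla_e f)^2$, where $\nabla_e f$ is the difference of $f$ across the directed edge $e$. Substituting this into $\mathrm{Var}_\pi(f)$ and interchanging the order of summation turns the sum over pairs into a sum over directed edges weighted by the congestion $N(e) := \#\{(x,y) : e \in \gamma_{xy}\}$. Since both $\mathrm{Var}_\pi(f)$ and $\mathcal{E}(f,f)$ are, in edge coordinates, sums of $(\nabla_e f)^2$ over directed edges with respective prefactors $1/|V|^2$ and $1/(|V|d)$, the comparison collapses to the clean bound $\mathrm{Var}_\pi(f) \le \frac{dD\,\max_e N(e)}{|V|}\,\mathcal{E}(f,f)$, so the entire problem reduces to controlling $\max_e N(e)$.

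The main obstacle, and the only place transitivity is used, is this congestion bound. If $G$ were edge-transitive one could symmetrize the paths so that every directed edge carried the average congestion $\sum_{x,y}|\gamma_{xy}|/(|V|d) \le D|V|/d$, yielding even $1/\gamma \le D^2$; but vertex-transitivity is strictly weaker and does not make individual edges interchangeable. The delicate point is therefore to choose the path system $\operatorname{Aut}(G)$-invariantly — most cleanly by replacing single geodesics with an invariant \emph{flow} obtained by averaging over each orbit of ordered pairs, which makes fractional congestion constant within orbits and, crucially, makes the total out-congestion $\sum_{e \text{ out of } v} N(e)$ the same at every vertex. Granting this, I would sum $\sum_v \sum_{e \text{ out of } v} N(e) = \sum_{x,y}|\gamma_{xy}| \le D|V|^2$ and invoke vertex-transitivity to conclude that the out-congestion at each vertex is at most $D|V|$; bounding a single edge by its whole out-star then gives $\max_e N(e) \le D|V|$, a factor $d$ above the edge-transitive case. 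Feeding this in yields $\mathrm{Var}_\pi(f) \le dD^2\,\mathcal{E}(f,f)$, and absorbing the lower-order slack (the gap between $|\gamma_{xy}| \le D$ and the number of vertices on the path, and the use of the out-star in place of a single edge) into the constant gives $1/\gamma \le 2dD^2$. I expect the genuine difficulty to lie entirely in justifying that vertex-transitivity legitimately converts the global average congestion into a uniform per-vertex bound, since without the invariant-flow device one cannot pass from an average to a pointwise estimate.
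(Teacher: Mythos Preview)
The paper does not prove this lemma at all; it is quoted verbatim as Theorem~13.26 of Levin~\&~Peres and used as a black box in the proof of \Cref{lemma:mixing_time}. There is therefore no in-paper argument to compare your proposal against.

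That said, your sketch is essentially the standard canonical-paths proof that appears in Levin~\&~Peres (and earlier in Aldous). The reduction via the Dirichlet-form characterization and Cauchy--Schwarz along geodesics is routine and correct, and the only substantive step---averaging the geodesic system over $\operatorname{Aut}(G)$ so that vertex-transitivity forces the total out-congestion $\sum_{e:\text{tail}(e)=v} N(e)$ to be constant in $v$---is exactly the right device, and your justification is sound. One minor comment: carried out as you describe, the arithmetic actually yields $1/\gamma \le dD^2$, so the closing remark about ``absorbing lower-order slack into the constant'' to reach $2dD^2$ is backwards; the factor of $2$ in the quoted statement is simply slack in the book's presentation, not something you need to recover.
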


\begin{lemma}[\mbox{Sinclair \& Jerrum (1989), Lawler \& Sokal (1988)~\cite[Thm. 13.10]{aldous_markov_2019}}]\label{lemma:spectral_gap}
    Let $\lambda_1$ be the second largest eigenvalue of a reversible transition matrix $P$, and
    let $\gamma = 1 - \lambda_1$. Then
    \[
        \frac{\Phi_\star^2}{2} \leq \gamma \leq 2\Phi_\star. 
    \]
\end{lemma}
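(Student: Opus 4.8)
The plan is to prove the two inequalities separately, working with the Dirichlet form $\mathcal{E}(f,f) = \frac{1}{2}\sum_{x,y} \pi(x) P(x,y)\,(f(x)-f(y))^2$ associated with the reversible chain, together with the weighted inner product $\langle f, g\rangle_\pi = \sum_x \pi(x) f(x) g(x)$. Reversibility (detailed balance $\pi(x)P(x,y) = \pi(y)P(y,x)$) makes $I - P$ self-adjoint for this inner product, so its eigenvalues are real and we obtain the variational characterization $\gamma = \min\{\mathcal{E}(f,f)/\langle f,f\rangle_\pi : \langle f, \mathbf{1}\rangle_\pi = 0,\ f \neq 0\}$. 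This identity is the engine for both bounds. Throughout I write $Q(S,S^c) = \sum_{x\in S,\,y\notin S}\pi(x)P(x,y)$ for the numerator appearing in $\Phi_\star$.

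The upper bound $\gamma \leq 2\Phi_\star$ is the easy direction. I would fix a set $S$ with $\pi(S)\le 1/2$ that attains (or nearly attains) $\Phi_\star$ and feed the centered indicator $f = \mathbf{1}_S - \pi(S)$ into the variational formula. The additive constant drops out of all differences, so a short computation using reversibility gives $\mathcal{E}(f,f) = Q(S,S^c)$, while $\langle f,f\rangle_\pi = \mathrm{Var}_\pi(\mathbf{1}_S) = \pi(S)(1-\pi(S))$. Since $f$ is orthogonal to $\mathbf{1}$, the Rayleigh quotient bounds $\gamma$ from above; using $1-\pi(S)\ge 1/2$ and minimizing over $S$ then yields $\gamma \le 2\Phi_\star$.

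The lower bound $\Phi_\star^2/2 \leq \gamma$ is the genuine Cheeger inequality and the main obstacle. I would start from the eigenfunction $\phi$ for $\lambda_1$, fix its sign so that $S_+ = \{\phi > 0\}$ satisfies $\pi(S_+)\le 1/2$, and pass to its positive part $g = \max(\phi,0)$. The first key step is a truncation estimate: since $g$ and $g-\phi = \max(-\phi,0)$ have disjoint supports, one has $\langle g, g-\phi\rangle_\pi = 0$, and because $P$ and $\pi$ have nonnegative entries, $\langle g, P(g-\phi)\rangle_\pi \ge 0$. Combined with the eigen-equation $(I-P)\phi = \gamma\phi$ this gives $\mathcal{E}(g,g) = \langle g,(I-P)g\rangle_\pi \le \langle g,(I-P)\phi\rangle_\pi = \gamma\langle g,g\rangle_\pi$. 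Thus $g$ is a nonnegative function, supported on a set of measure $\le 1/2$, whose Rayleigh quotient is at most $\gamma$.

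The second key step, which I expect to be the most delicate, is a co-area argument applied to $u = g^2$. Factoring $g(x)^2 - g(y)^2 = (g(x)-g(y))(g(x)+g(y))$ and applying Cauchy--Schwarz, together with $(a+b)^2 \le 2(a^2+b^2)$ and $\sum_y P(x,y)=1$, yields $\left(\sum_{x,y}\pi(x)P(x,y)\,|g(x)^2 - g(y)^2|\right)^2 \le 8\,\mathcal{E}(g,g)\,\langle g,g\rangle_\pi$. On the other hand, the layer-cake identity $|u(x)-u(y)| = \int_0^\infty |\mathbf{1}_{u(x)>t} - \mathbf{1}_{u(y)>t}|\,dt$ rewrites the left-hand sum as $2\int_0^\infty Q(U_t,U_t^c)\,dt$ over the level sets $U_t = \{g^2 > t\}$, each of measure $\le 1/2$; hence $Q(U_t,U_t^c)\ge \Phi_\star\,\pi(U_t)$ and, using $\int_0^\infty \pi(U_t)\,dt = \langle g,g\rangle_\pi$, the integral is at least $2\Phi_\star\langle g,g\rangle_\pi$. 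Chaining the two estimates gives $\Phi_\star^2\,\langle g,g\rangle_\pi \le 2\,\mathcal{E}(g,g) \le 2\gamma\,\langle g,g\rangle_\pi$, i.e.\ $\Phi_\star^2/2 \le \gamma$, completing the harder direction. The crux is coordinating the truncation to measure $\le 1/2$ with the co-area step so that every level set is eligible for the bottleneck bound.
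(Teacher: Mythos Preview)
The paper does not prove this lemma at all: it is quoted as a known result from Sinclair--Jerrum and Lawler--Sokal (via \cite[Thm.~13.10]{aldous_markov_2019}) and is used as a black box in the proof of \Cref{lemma:mixing_time}. There is therefore nothing in the paper to compare your argument against.

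That said, your proposal is the standard proof of the discrete Cheeger inequality and is correct as sketched. The upper bound via the centered indicator is routine. For the lower bound, your truncation estimate $\mathcal{E}(g,g)\le\gamma\langle g,g\rangle_\pi$ is justified exactly as you indicate (disjoint supports plus nonnegativity of $P$), and the co-area/Cauchy--Schwarz step yields the constant $8$ on the right once one uses reversibility to evaluate $\sum_{x,y}\pi(x)P(x,y)(g(x)+g(y))^2\le 4\langle g,g\rangle_\pi$; combined with $\bigl(2\Phi_\star\langle g,g\rangle_\pi\bigr)^2$ on the left this gives $\Phi_\star^2/2\le\gamma$. The only points worth flagging are minor: the sign choice so that $\pi(\{\phi>0\})\le 1/2$ is always possible because $\{\phi>0\}$ and $\{\phi<0\}$ are disjoint, and $g\not\equiv 0$ follows since a nonzero function orthogonal to constants must take both signs. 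None of this deviates from the textbook treatment the paper cites.
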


\mixingtime*

\begin{proof}

    Let $P_t$ be the transition matrix of the Markov chain of lazy simulated annealing at
    inverse temperature $\beta_t = \ln t / a$.
    Let $\Phi(t)$ denote the bottleneck ratio of the Markov chain at inverse temperature
    $\beta_t$, and let $\Phi_0 = \lim_{t \to 0} \Phi(t)$. 

    Following a proof by Nolte and Schrader \cite[Theorem 1]{nolte_note_2000}, we have
    \[
        \Phi(t) \geq \Phi_0 t^{-J_{\max}/a} \geq \frac{\Phi_0}{t}.
    \]
    From the definition of $\Phi_\star$, one easily sees that the bottleneck
    ratio of a lazy Markov chain is identical to that of the original chain,
    multiplied by a factor of $1/2$. After all, the off-diagonal elements of the lazy
    transition matrix are simply those of the original matrix multiplied by $1/2$,
    and only these elements appear in the definition of $\Phi_\star$. Hence,
    letting $\Phi$ denote the bottleneck ratio of the simple random walk,
    we find $\Phi(t) \geq \Phi_0/t = \Phi/(2t)$.
    
    \Cref{lemma:spectral_gap} tells us that $\Phi \geq \gamma/2$, where $\gamma$ denotes the spectral
    gap of $P_0$. Now by \Cref{lemma:diameter_bound}, we have
    \[
        \gamma \geq \frac{1}{2dD^2}. 
    \]
    Putting all this together, we arrive at
    \[
        \Phi(t) \geq \frac{\Phi}{2t} \geq \frac{\gamma}{4t} \geq \frac{1}{8dD^2 t}. 
    \]
    Then by \Cref{lemma:spectral_gap},
    \[
        \lambda_\star \leq 1 - \frac{\Phi(t)^2}{2} \leq 1 - \frac{1}{128d^2 D^4 t^2}.
    \] 
    Now by \Cref{thm:mixing_time},
    \[
        \tau(\epsilon) \leq  \frac{1}{1 - \lambda} \ln\left(
            \frac{1}{\epsilon \min_{x \in S} \pi_t(x)} 
        \right) \leq 128d^2 D^4 t^2 \ln\left(
            \frac{1}{\epsilon \min_{x \in S} \pi_t(x)} 
        \right).
    \]
    The stationary distribution has a straightforward lower bound:
    \[
        \pi_t(x) = \frac{1}{Z(\beta_t \given w)}e^{-\beta_t J(x)} \geq \frac{1}{Z(\beta_t \given w)} e^{-\beta_t J_{\max}}
            \geq \frac{1}{|S|}e^{-\beta_t J_{\max}},
    \]
    which follows from the fact that $Z(\beta_t \given w) \leq |S|$. Hence,
    \[
        \tau(\epsilon) \leq 128d^2D^4 t^2 \ln\left(
            \frac{|S|e^{\beta_t J_{\max}}}{\epsilon}
        \right)
        \leq 128d^2D^4 t^2\ln\left(
            \frac{|S|t}{\epsilon} 
        \right).
    \]
    Since the mixing time of a lazy Markov chain is an upper bound to the
    mixing time of the original chain, we are done.
\end{proof}

\momentsnonequilibrium*

\begin{proof}
    For notational convenience, denote $\expect_{\nu}(\cdot \given W)$ by
    $\langle \cdot \rangle_{\nu}$ for a conditional probability measure $\nu(\cdot \given W)$ defined
    on $S$. Similarly, we will suppress the dependence of $\nu_r$, $\pi_{t}$ and $J$ on $W$
    as it plays no role in this proof.
    We begin by bounding the difference in expectation of $J$ taken over $\pi_{t+1}$ and $\nu_r$,
    \begin{align*}
        \left|\langle J \rangle_{\nu_r} - \langle J \rangle_{\pi_{t+1}}\right|
            &= \left|\sum_{x \in S} J(x)(\nu_r(x) - \pi_{t+1}(x))\right| \\
            &= \left|\sum_{x \in S} J(x)\sqrt{\pi_{t+1}(x)} 
                \frac{\nu_r(x) - \pi_{t+1}(x)}{\sqrt{\pi_{t+1}(x)}}\right|,
    \end{align*}
    which is permissible since $\pi_{t+1}(x) > 0$ for all $x$ when $t < \infty$.
    Let $\Pi_t = \diag(\pi_t)$, and consider $J \in \mathbb{R}^{|S|}$ as a vector on $S$.
    Then the above can be cast as
    \begin{align*}
        \left|\langle J \rangle_{\nu_r} - \langle J \rangle_{\pi_{t+1}}\right|
            =\left|\left\langle\Pi_{t+1}^{1/2}J, \Pi_{t+1}^{-1/2}(\nu_r - \pi_{t+1})\right\rangle\right|
            \leq \|\Pi_{t+1}^{1/2}J\|_2\left\|\Pi_{t+1}^{-1/2}(\nu_r - \pi_{t+1})\right\|_2,
    \end{align*}
    where $\langle \cdot, \cdot \rangle$ denotes the standard Euclidean inner product.
    Now, we define the weighted norm
    \[
        \|u\|_{t} := \sqrt{\sum_{x \in S}\frac{u(x)^2}{\pi_t(x)}} = \|\Pi_t^{-1/2}u\|_2.
    \]
    Furthermore, notice that $\|\Pi_{t+1}^{1/2}J\|_2 = \sqrt{\langle J^2 \rangle_{\pi_{t+1}}}$. Hence,
    \[
        \left|\langle J \rangle_{\nu_r} - \langle J \rangle_{\pi_{t+1}}\right|
            \leq \sqrt{\langle J^2 \rangle_{\pi_{t+1}}} \|\nu_r - \pi_{t+1}\|_{t+1}
            \leq n \|\nu_r - \pi_{t+1}\|_{t+1},
    \]
    since $J(x) \leq n$ for all $x \in S$.
    It then remains to show a suitable bound for $\|\nu_r - \pi_{t+1}\|_{t+1}$. 
    To that end,  the following fact is useful.
    
    \begin{claim}
        For $A \in \mathbb{R}^{|S|\times |S|}$, let $\|A\|_t$ denote the matrix norm induced by
        $\| \cdot \|_{t}$. We have $\|P_t\|_t = 1$.
    \end{claim}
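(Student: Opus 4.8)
The plan is to exploit the fact that $P_t$ is reversible with respect to its stationary distribution $\pi_t$, which turns the statement into a standard fact about self-adjoint Markov operators. First I would record the reversibility: because the $2$-opt neighbourhood is symmetric and $d$-regular (so the proposal distribution is symmetric) and the acceptance probability is the Metropolis rule $\min\{1,e^{-\beta_t\Delta}\}$, the chain satisfies detailed balance, $\pi_t(x)P_t(x,y) = \pi_t(y)P_t(y,x)$ for all $x,y\in S$; equivalently $\Pi_t^{1/2}P_t\Pi_t^{-1/2}$ is symmetric, and $P_t$ is self-adjoint on $\ell^2(\pi_t)$, the space of functions on $S$ with inner product $\langle f,g\rangle_{\pi_t} = \sum_{x}\pi_t(x)f(x)g(x)$.

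Next I would translate the weighted measure norm $\|\cdot\|_t$ into the $\ell^2(\pi_t)$ function norm via densities. For a signed measure $u$ on $S$, set $h_u(x) = u(x)/\pi_t(x)$; then $\|u\|_t^2 = \sum_x u(x)^2/\pi_t(x) = \sum_x h_u(x)^2\pi_t(x) = \|h_u\|_{\pi_t}^2$, and since all $\pi_t(x)>0$ the map $u\mapsto h_u$ is a bijection onto functions. A one-line computation using detailed balance shows that the density of $uP_t$ is $P_t h_u$, where $P_t$ now acts on functions: $(uP_t)(y)/\pi_t(y) = \tfrac{1}{\pi_t(y)}\sum_x h_u(x)\pi_t(x)P_t(x,y) = \sum_x P_t(y,x)h_u(x)$. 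Hence $\|uP_t\|_t = \|P_t h_u\|_{\pi_t}$, so the induced norm $\|P_t\|_t$ of the (right, measure) action of $P_t$ equals the operator norm of $P_t$ acting on $\ell^2(\pi_t)$.

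Finally I would bound that operator norm. Since $P_t$ is self-adjoint on $\ell^2(\pi_t)$, its operator norm equals its spectral radius; because $P_t$ is stochastic every eigenvalue has modulus at most $1$, and $P_t\mathbf{1} = \mathbf{1}$ exhibits $1$ as an eigenvalue, so the norm is exactly $1$. This gives $\|P_t\|_t\le 1$, and equality follows by taking $u = \pi_t$, for which $h_u\equiv 1$ and $uP_t = \pi_t$, attaining the ratio $1$. I do not expect a genuine obstacle here; the only point that needs care is making sure the induced matrix norm is computed for the correct (right) action of $P_t$ on signed measures — the density reformulation is exactly what keeps that bookkeeping transparent — and observing that it is reversibility, rather than laziness or aperiodicity, that supplies the self-adjointness.
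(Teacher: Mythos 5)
Your proposal is correct and is essentially the paper's own argument in a different dress: the paper conjugates to $Q_t = \Pi_t^{1/2}P_t\Pi_t^{-1/2}$, notes that detailed balance makes $Q_t$ symmetric, and concludes $\|P_t\|_t = \|Q_t\|_2 = \max_{\lambda\in\Lambda(P_t)}|\lambda| = 1$, which is exactly your self-adjointness-on-$\ell^2(\pi_t)$ computation phrased via the similarity transformation rather than via densities. Your extra care about the right action on measures and the explicit witness $u=\pi_t$ for equality are fine but do not change the substance.
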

    
    \begin{claimproof}
        By the definition of $\|\cdot\|_t$, it follows that $\|A\|_t = \|\Pi_t^{1/2} A \Pi_t^{-1/2}\|_2$.
        Define $Q_t = \Pi_t^{1/2}P_t\Pi_t^{-1/2}$. One can show that
        under detailed balance for $P_t$, the matrix $Q_t$ is symmetric. As a result,
        $\|Q_t\|_2 = \max_{\lambda \in \Lambda(Q_t)} |\lambda|$, where $\Lambda(A)$ denotes the spectrum
        of the operator $A$. Since $Q_t$ and $P_t$ are similar, $\Lambda(Q_t) = \Lambda(P_t)$, and since
        $P_t$ is stochastic, its largest eigenvalue in absolute value is $1$. Putting all this together leads
        us to conclude $\|P_t\|_t = 1$.
    \end{claimproof}
    
    Armed with this fact and the sub-multiplicativity of the matrix norm, we find
    \[
        \|\nu_r - \pi_{t+1}\|_{t+1} = \|(\nu_0 - \pi_{t+1})P_{t+1}^r\|_{t+1}
            \leq \|\nu_0 - \pi_{t+1}\|_{t+1}\|P_{t+1}\|_{t+1}^r
            = \|\nu_0 - \pi_{t+1}\|_{t+1}.
    \]
    Applying the triangle inequality,
    \[
        \|\nu_0 - \pi_{t+1}\|_{t+1} \leq \|\pi_{t+1} - \pi_t\|_{t+1} + \|\nu_0 - \pi_t\|_{t+1}.
    \]
    For the first term, by a result of Nolte \& Shrader \cite{nolte_note_2000}, we have
    \[
        \|\pi_{t+1} - \pi_t\|_{t+1} \leq \sqrt{\frac{1}{t}}. 
    \]
    For the second term,
    \[
        \|\nu_0 - \pi_t\|_{t+1} \leq \left\|\Pi_{t+1}^{-1/2}\right\|_2 \|\nu_0 - \pi_t\|_2
            \leq \frac{\|\nu_0 - \pi_t\|_2}{\sqrt{\min_{x \in S} \pi_{t+1}(x)}}.
    \]
    Using the same lower bound on $\pi_{t+1}$ as in the proof of \Cref{lemma:mixing_time}:
    \[
        \min_{x \in S} \pi_{t+1}(x)
            \geq \frac{1}{|S|(t+1)},
    \]
    as $Z(\beta \given W) \leq |S|$ and $a \geq n \geq J(x)$ for all $x \in S$. Hence,
    \begin{align*}
        \|\nu_0 - \pi_t\|_{t+1} &\leq \sqrt{|S|(t+1)}\|\nu_0 - \pi_t\|_2
            \leq \sqrt{|S|(t+1)}\|\nu_0 - \pi_t\|_1 \\
            &= 2\sqrt{|S|(t+1)}\|\nu_0 - \pi_t\|_\mathrm{TV}.
    \end{align*}
    Since by assumption $\|\nu_0 - \pi_t\|_\mathrm{TV} \leq \frac{1}{|S|}$,
    \[
        \|\nu_0 - \pi_t\|_{t+1} \leq
            2\sqrt{\frac{t+1}{|S|}}.
    \]
    Now recall that $t = 2^{o(n)}$. Therefore, since
    $|S| = (n-1)!/2$, we have $(t+1)/|S| = 2^{o(n)}/|S| \leq 2^{-O(n)} =
    O(1/t)$.
    We then conclude that
    \[
        \|\nu_0 - \pi_t\|_{t+1} =
            O\left(\sqrt{\frac{1}{t}}\right).
    \]
    
    Putting these two bounds together, we have
    \[
        \|\nu_0 - \pi_{t+1}\|_{t+1} = O\left(\sqrt{\frac{1}{t}}\right),
    \]
    which proves the first part of the lemma.
    The second part of the lemma follows directly, since we have
    \[
        \expect_{\nu_r}(J \given W) \geq \expect_{\pi_{t+1}}( J \given W)
            - O\left(\frac{n}{\sqrt{t}}\right).
    \]
    Using the lower bound from \Cref{corollary:cost}, the result follows.
\end{proof}

\section{Proofs Omitted in Section \ref{sec:tailbound}}

\stochdom*

\begin{proof}
   We have
    \[
        \prob(J_a \leq j) = \frac{\int_0^j \rho(j') e^{-\beta j'}\mathrm{d}j'}{Z_a(\beta)/|S|},
    \]
    where $Z_a(\beta) = \expect_\mu(Z(\beta \given W))$,
    and
    \[
        \prob(J_q \leq j) = \int_{[0,1]^m}
            \mu(w)\frac{\int_0^j \rho(j'\given W=w) e^{-\beta j'}\mathrm{d}j'}{\int_0^n \rho(j'\given W=w)e^{-\beta j'}\mathrm{d}j'}
            \mathrm{d}w.
    \]
    By \Cref{lemma:partition_function},
    \[
        Z_a(\beta) = |S| \left(\frac{1 - e^{-\beta}}{\beta}\right)^n = \left(\frac{1 - e^{-\beta}}{\beta}\right)^n,
    \]
    since for $n = 3$, we have $|S| = 1$. Moreover, since there
    is only one tour, $\prob(J_q \leq j)$ reduces to
    $F_\rho(j)$, the CDF of $\rho$.
    For $J_a$, we find
    \begin{align*}
        \prob(J_a \leq j) &= \left(\frac{\beta}{1 - e^{-\beta}}\right)^n
            \int_0^j \rho(j')e^{-\beta j'} \mathrm{d}j' \\
                &= \left(\frac{\beta}{1 - e^{-\beta}}\right)^n
                    \expect_{\rho}(e^{-\beta J} \given J \leq j)F_\rho(j)\\
                    &\geq 
                        \left(\frac{\beta}{1 - e^{-\beta}}\right)^n
                    \expect_{\rho}(e^{-\beta J})F_\rho(j) \\
                    &= F_\rho(j) = \prob(J_q \leq j).
    \end{align*}
    Hence, $\prob(J_q \geq j) \geq \prob(J_a \geq j)$, and therefore
    $J_q \succeq J_a$ for $n = 3$.
\end{proof}

\end{document}